  \newtheorem*{thm}{Theorem}
  \newtheorem{prop}{Proposition}[section]
  \newtheorem{lem}{Lemma}[section]
\newcommand{\diff}{\mathrm{d}}	
\newcommand{\Wass}{\mathcal{W}}
\begin{document}
\begin{CJK*}{GB}{} 
\title{Thermodynamically Optimal Information Gain in
Finite Time Measurement}
\author{Rihito Nagase}
\affiliation{Department of Applied Physics, The University of Tokyo, 7-3-1 Hongo, Bunkyo-ku, Tokyo 113-8656, Japan}
\author{Takahiro Sagawa}
\affiliation{Department of Applied Physics, The University of Tokyo, 7-3-1 Hongo, Bunkyo-ku, Tokyo 113-8656, Japan}
\affiliation{Quantum-Phase Electronics Center (QPEC), The University of Tokyo, 7-3-1 Hongo, Bunkyo-ku, Tokyo 113-8656, Japan}
\begin{abstract}
The tradeoff relation between speed and cost is a central issue in designing fast and efficient information processing devices. We derive an achievable bound on thermodynamic cost for obtaining information through finite-time (non-quasi-static) measurements. Our proof is based on optimal transport theory, which enables us to identify the explicit protocol to achieve the obtained bound. Moreover, we demonstrate that the optimal protocol can be approximately implemented by an experimentally feasible setup with quantum dots. Our results would lead to design principles of high-speed and low-energy-cost information processing.
\end{abstract}

\maketitle
\end{CJK*}
\section{Introduction}\label{sec_1}
Thermodynamics of information has revealed the fundamental bounds on energy cost for information processing such as measurement, feedback, and information erasure \cite{MaxwellDemon,Parrondo_Horowitz_Sagawa_NatPhys_2015,Sagawa_Ueda_PhysRevLett.102.250602_2009,Sagawa_Ueda_PhysRevLett.104.090602_2010,Sagawa_Ueda_PhysRevLett.109.180602_2012,Horowits_Esposito_Missimiliano_PhysRevX.4.031015_2014,Toyabe_Sagawa_Ueda_NatPhys_2010,Berut_Nature_2012,Ciliberto_PhysRevX.7.021051_2017}, which has been applied to a variety of systems including biological systems \cite{Barato_NewJPhys_2014,Sartori_PLOS_2014,Ito_Sagawa_NatCommun_2015} and CMOS devices \cite{Freitas_PhysRevX.11.031064_2021,Freitas_PhysRevLett.129.120602_2022}. The fundamental thermodynamic bounds can be achieved in the quasi-static limit, which requires an infinitely long operation time. Then, a crucial problem lies in determining the fundamental bound on the thermodynamic cost of \textit{finite-time} information processing \cite{Zulkowski_PhysRevE.89.052140_2014,Proesmans_PhysRevE.102.032105_2020,Proesmans_PhysRevLett.125.100602_2020,Nakazato_Ito_PhysRevResearch.3.043093_2021,Tanogami_2023_InfoTUR,Fujimoto_2023_GameTheoretical}, which is significant for designing fast and low-power consumption computers.\par
The thermodynamic cost of finite-time processes has been evaluated by the inequalities called the speed limits \cite{Aurell_JStatPhys_2012,Shiraishi_Funo_Saito_PhysRevLett.121.070601_2018,Ito_PhysRevLett.121.030605_2018,Ito_Dechant_PhysRevX.10.021056_2020,Vo_VanVu_Hasegawa_PhysRevE.102.062132_2020,Plata_PhysRevE.101.032129_2020,Falasco_Esposito_PhysRevLett.125.120604_2020,Yoshimura_Ito_PhysRevLett.127.160601_2021}. Among them, optimal transport theory \cite{OptimalTransport} provides the achievable bounds \cite{Jordan_Kinderlehrer_Otto_SIAM_1998,Benamou_Brenier_2000ACF,Maas_JFuncAnaly_2011gradient,Auruell_PhysRevLett.106.250601_2011,Auruell_JStatPhys_2012,Dechant2019thermodynamic,Chen_Georgiou_IEEE_2017matricial,Chen_Georgiou_IEEE_2019stochastic,Fu_Georgiou_Automatica_2021maximal,Miangolarra_IEEE_2022geometry,VanVu_Hasegawa_PhysRevLett.126.010601,Dechant_JPhys_2022,Yoshimura_PRR_2023,Chennakasavalu_PhysRevLett.130.107101_2023,T.V.Vu_PhysRevX.13.011013}, by determining the optimal transport plan which minimizes the thermodynamic cost (i.e., entropy production) for transporting one distribution to another. In a broader context, optimal transport theory has a wide range of applications including image processing \cite{Haker_Zhu_Tannenbaum_Angenent_IJCV_2004}, machine learning \cite{Kolouri_7974883_2017}, and biology \cite{Schiebinger_Cell_2019}. The key concept in this theory is the Wasserstein distance quantifying the distance between probability distributions, which gives the lower bound on the thermodynamic cost when applied to thermodynamics. \par
As shown later (see also Ref. \cite{Nakazato_Ito_PhysRevResearch.3.043093_2021}), the fundamental bound on the thermodynamic cost $\Sigma$ for obtaining mutual information $I$ in finite operation time $\tau$ is given by the form
\begin{equation}
\Sigma\geq I+\Wass f\left(\frac{\Wass}{D\tau}\right),\label{eq_1}
\end{equation}
where $\Wass$ represents the Wasserstein distance, $D$ is the fixed timescale of the time evolution, and $f(x)$ is an increasing function which is determined by the particular choice of timescale $D$. For example, $f(x)=x$ when we choose time-averaged mobility as $D$, while $f(x)=2\tanh^{-1}(x)$ when we choose time-averaged activity as $D$ (see Appendix A for rigorous definitions of mobility and activity).
The second term on the right-hand side, which represents the finite-time effect, vanishes in the limit of $\tau\to\infty$. In this limit, inequality (\ref{eq_1}) reduces to the fundamental bounds obtained in Refs. \cite{Sagawa_Ueda_PhysRevLett.102.250602_2009,Sagawa_Ueda_PhysRevLett.109.180602_2012}. Here, it is crucial that the information $I$ and the distance $\Wass$ are not independent, and thus the minimization problem for $\Wass$ under given $I$ is yet to be solved. In other words, the fundamental thermodynamic cost is not determined solely by inequality (\ref{eq_1}).\par
In this paper, we solve this problem; We derive the achievable lower bound on $\Wass$ under given $I$, and identifies the explicit protocol to achieve the bound. This provides the achievable speed limit for finite-time measurement processes. To put it another way, we derive the achievable upper bound of $I$ under given $\Wass$, which is the mathematically most nontrivial result of this paper. This determines the truly fundamental cost required for finite-time measurement. Moreover, we demonstrate that the optimal protocol can be approximately implemented by tuning experimentally accessible parameters of two interacting quantum dots. These results push forward finite-time information thermodynamics, and would serve as a design principle that makes high speed compatible with high energy efficiency in information processing. \par
The organization of this paper is as follows. In Section \ref{sec_2}, we describe our setup along with the review of stochastic thermodynamics and speed limits. In Section \ref{sec_3}, we present the main results of this paper. We describe the main mathematical theorem, and then derive its physical consequences: two fundamental bounds on the energy cost of measurement. In Section \ref{sec_4}, we introduce a system of coupled quantum dots as an implementation for our results, and give the approximately optimal measurement protocol. In Section \ref{sec_5}, we summarize the results of this paper and discuss future prospects.

%
%
%
%
%
%
\section{Setup}\label{sec_2}
\subsection{Stochastic thermodynamics}\label{subsec_2A}
We consider classical discrete-state systems denoted as $X$ and $Y$. The entire system is attached to a heat bath at inverse temperature $\beta$, and its time evolution is described by a Markov jump process. $Y$ plays the role of memory and stores information about system $X$ through a measurement from time $t=0$ to $\tau$.\par
Let $x\in\mathcal{X}\coloneqq\{1,2,\cdots,n_X\}$ and $y\in\mathcal{Y}\coloneqq\{1,2,\cdots,n_Y\}$ represent the possible states that $X$ and $Y$ can take, respectively. The joint probability of finding the entire system in state $(x, y)$ at time $t$ is denoted as $p^{XY}_t(x, y)$, and the marginal probability of state $x$ for $X$ (resp. $y$ for $Y$) is denoted as $p^X_t(x)=\sum_yp^{XY}_t(x, y)$ (resp. $p^Y_t(y)=\sum_x p^{XY}_t(x, y)$). Memory $Y$ is initialized to take the value $y = 1$ with probability $1$ at $t = 0$. We also assume that the transition of $X$ does not occur during the measurement, namely, for any different $x'$ and $x$, the transition from state $(x',y')$ to $(x,y)$ does not occur. This assumption of ignoring the back action of measurement is a standard assumption for classical measurements, because there is no fundamental disturbance on the measured system in the classical case.
 For example, if the time scale of the measurement process is sufficiently shorter than that of $X$, or if there are potential walls that inhibit transitions between the states of $X$ (as is the case for the Szilard engine), the measurement back action is irrelevant. \ In this case, $p^X_t(x)$ is fixed to a constant probability $p^X(x)$.\par
 The time evolution of the entire system can be expressed as \cite{Horowits_Esposito_Missimiliano_PhysRevX.4.031015_2014}
\begin{equation}
\begin{split}
\frac{\diff}{\diff t}p_t^{XY}(x,y)=\sum_{y':(y',y)\in\mathcal{N}_x}&\left[R_t(y,y'|x)p^{XY}_t(x,y')\right.\\
&\left.-R_t(y',y|x)p^{XY}_t(x,y)\right].
\end{split}\label{eq_2}
\end{equation}
Here, $R_t(y,y'|x)$ represents the transition rate from state $(x, y')$ to $(x, y)$, and $\mathcal{N}_x$ is the set of pairs of distinct states $(y', y)$ between which the transition under $x$ is allowed. We assume that the possibility of transitions is undirected, so that if $(y',y) \in \mathcal{N}_x$, then $(y,y') \in \mathcal{N}_x$. Let $Q_t^Y(y,y'|x)$ represent the stochastic heat absorbed by $Y$ during the transition from $(x,y')$ to $(x,y)$ at time $t$, where we assume the local detailed balance condition
\begin{equation}
\ln\frac{R_t(y,y'|x)}{R_t(y',y|x)}=-\beta Q_t^Y(y,y'|x).\label{eq_3}
\end{equation}
\par
We next introduce mutual information which characterizes the amount of information that $Y$ obtains from $X$ through the measurement. In general, the Shannon entropy of a probability distribution $p=\{p(i)\}_i$ is given by $S(p)\coloneqq-\sum_ip(i)\ln p(i)$,which defines the mutual information between $X$ and $Y$ at time $\tau$ as
\begin{equation}
I_\tau^{X:Y}\coloneqq S(p_\tau^X)+S(p_\tau^Y)-S(p_\tau^{XY}).\label{eq_4}
\end{equation}
If $X$ and $Y$ are uncorrelated, $I_\tau^{X:Y}=0$. Otherwise, $I_\tau^{X:Y}>0$. 
\par
We introduce the entropy production, which quantifies the dissipation due to irreversibility of the process, as
\begin{equation}
\Sigma_\tau^{XY}\coloneqq
S\left(p_\tau^{XY}\right)-S\left(p_0^{XY}\right)-\beta Q_\tau^Y.\label{eq_5}
\end{equation}
Here, $Q_\tau^Y\coloneqq\int_0^\tau\sum_{x,y,y'}Q_t^Y(y,y'|x)R_t(y,y'|x)p_t^{XY}(x,y')\diff t$ denotes the total heat absorbed by $Y$ until time $\tau$. We note that $Q_\tau^Y$ equals the total heat absorbed by the entire system, because the transitions of $X$ do not occur during the measurement. Using the entropy production, the second law of thermodynamics is given by $\Sigma_\tau^{XY}\geq0$, which is equivalent to the conventional second law of stochastic thermodynamics\cite{Seifert2012stochastic}. We can also introduce the thermodynamic cost for the memory $Y$ as  $\Sigma_\tau^Y\coloneqq S\left(p_\tau^Y\right)-S\left(p_0^Y\right)-\beta Q_\tau^Y$. Equivalently, $\Sigma_\tau^Y$ can be expressed as $\Sigma_\tau^Y=\beta(W^Y-\Delta F^Y)$, where $W^Y$ is the work done on $Y$, and $\Delta F^Y$ is the change in non-equilibrium free energy of $Y$ \cite{StochasticThermo}, implying that the second law gives the fundamental bounds on the energy cost. Using $\Sigma_\tau^{Y}$ and $I_\tau^{X:Y}$, the entropy production can be decomposed as \cite{Sagawa_Ueda_PhysRevLett.102.250602_2009}
\begin{equation}
\Sigma_\tau^{XY}=\Sigma_\tau^{Y}-I_\tau^{X:Y}.\label{eq_6}
\end{equation}
This decomposition and the second law lead to $\Sigma_\tau^Y\geq I_\tau^{X:Y}$, which is equivalent to inequality (\ref{eq_1}) in the limit of $\tau\to\infty$.\par
%
%
%
%
%
%
\subsection{Wasserstein distance and speed limit}\label{subsec_2B}
We introduce the Wasserstein distance between probability distributions $p_0^{XY}$ and $p_\tau^{XY}$ (hereafter in this paragraph, we abbreviate the superscript $XY$), defined as \cite{T.V.Vu_PhysRevX.13.011013}
\begin{equation}
\begin{split}
&\Wass(p_0,p_\tau)\\
&\ \coloneqq\min_{\pi\in\Pi(p_0,p_\tau)}\sum_{x,y,y'}d(y,y'|x)\pi[(x,y),(x,y')].
\end{split}\label{eq_7}
\end{equation}
Here, $\pi[(x,y),(x,y')]\ (\geq0)$ can be interpreted as the amount of probability sent from state $(x,y')$ to $(x,y)$ in a transport plan $\pi$, and $\Pi(p_0,p_\tau)$ is the set of all transport plans from $p_0$ to $p_\tau$ satisfying $\sum_{y\in\mathcal{Y}}\pi[(x,y),(x,y')]=p_0(x,y')$ and $\sum_{y'\in\mathcal{Y}}\pi[(x,y),(x,y')]=p_\tau(x,y)$. The coefficient $d(y,y'|x)$ represents the minimum number of transitions required to go from state $(x,y')$ to $(x,y)$, defining the transport cost of the distribution.
\par
When $p_0^{XY}$ is transformed into $p_\tau^{XY}$ obeying Eq. (\ref{eq_2}), the entropy production is bounded as \cite{T.V.Vu_PhysRevX.13.011013}
\begin{equation}
\Sigma_\tau^{XY}\geq \Wass\left(p_0^{XY},p_\tau^{XY}\right)f\left(\frac{\Wass\left(p_0^{XY},p_\tau^{XY}\right)}{D\tau}\right),\label{eq_8}
\end{equation}
where $f(x)=x$ if $D$ is time-averaged mobility $\langle m\rangle_\tau$, and $f(x)=2\tanh^{-1}(x)$ if $D$ is time-averaged activity $\langle a\rangle_\tau$. There exist protocols $\{R_t(y,y'|x)\}$ that achieve the equalities for each choice of $D$, but the construction methods are different. When $D=\langle m\rangle_\tau$, the optimal protocol $\{R_t(y,y'|x)\}$ is determined by the condition that probabilities are transported under a uniform and constant thermodynamic force along the optimal transport plan from $p_0^{XY}$ to $p_\tau^{XY}$. For $D=\langle a\rangle_\tau$, the condition for the optimal protocol $\{R_t(y,y'|x)\}$ is that probabilities are transported under uniform and constant activity and probability current along the optimal transport plan.\par
By substituting the decomposition of Eq. (\ref{eq_6}) into inequality (\ref{eq_8}), we obtain inequality (\ref{eq_1}) by identifying $\Sigma=\Sigma_\tau^{Y}$, $I=I_\tau^{X:Y}$, and $\Wass=\Wass\left(p_0^{XY},p_\tau^{XY}\right)$. We note that $p_\tau^{XY}$ is fixed and thus mutual information $I_\tau^{X:Y}$ is fixed in the present setup. Therefore, the optimization of $\Sigma_\tau^Y$ is equivalent to the optimization of $\Sigma_\tau^{XY}$, which is achieved by 
a certain protocol $\{R_t(y',y|x)\}$. We can then achieve the equality of (\ref{eq_1}) by the same protocol. In this sense, the right-hand side of inequality (\ref{eq_1}), which is expressed as the function of mutual information and Wasserstein distance, provides the minimum energy cost required to transform a fixed initial distribution into the fixed final distribution through measurement processes.\ We note that the Wasserstein distance in this work is the Wasserstein-1 distance, while a bound similar to (\ref{eq_1}) has been obtained in Ref. \cite{Nakazato_Ito_PhysRevResearch.3.043093_2021} by using Wasserstein-2 distance.
%
%
%
%
%
%
\section{Fundamental bound on the energy cost}\label{sec_3}
\subsection{Main theorem}\label{sec_3A}
As mentioned before, the fundamental bounds on the energy cost for obtaining a certain amount of information cannot be determined solely from inequality (\ref{eq_1}). Therefore, we first solve the dual problem that reveals the upper bound on $I_\tau^{X:Y}$ under fixed $\Wass\left(p_0^{XY},p_\tau^{XY}\right)$. This is the most nontrivial part of our study and can be summarized in the following theorem.
\begin{thm}
For any fixed $\Wass=\Wass\left(p_0^{XY},p_\tau^{XY}\right)$, there exists a probability distribution $\tilde{p}_\Wass^X$ (explicitly defined below) such that
\begin{equation}
I_\tau^{X:Y}\leq S\left(p^X\right)-(1-\Wass)S\left(\tilde{p}^X_\Wass\right)\eqqcolon I_{p^X}(\Wass).\label{eq_9}
\end{equation}
\end{thm}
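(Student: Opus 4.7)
My plan is to exploit the fact that $p_0^{XY}(x,y)=p^X(x)\delta_{y,1}$ concentrates all probability on the single memory state $y=1$, so every nat of information acquired corresponds to probability mass leaving $y=1$. First I would rewrite the mutual information as
\begin{equation*}
I_\tau^{X:Y}=S(p^X)-H(X|Y)_\tau,\qquad H(X|Y)_\tau=\sum_y p_\tau^Y(y)\,S\!\bigl(p_\tau^{X|Y}(\cdot|y)\bigr),
\end{equation*}
reducing the task to a lower bound on $H(X|Y)_\tau$. Because mass can only leave $(x,1)$ and each transition costs at least one unit of the graph distance $d(y,1|x)$, the optimal transport plan gives $\Wass=\sum_{x,y}d(y,1|x)\,p_\tau^{XY}(x,y)\geq 1-p_\tau^Y(1)$. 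Setting $\alpha:=p_\tau^Y(1)$ and $\tilde{p}(x):=p_\tau^{X|Y}(x|1)$, I obtain $\alpha\geq 1-\Wass$ together with the marginal constraint $\alpha\tilde{p}(x)\leq p^X(x)$. Dropping the nonnegative $y\neq 1$ contributions to $H(X|Y)_\tau$ then yields $I_\tau^{X:Y}\leq S(p^X)-\alpha S(\tilde{p})$, so the sharpest bound is obtained by \emph{minimizing} $\alpha S(\tilde{p})$ over admissible $(\alpha,\tilde{p})$.

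At fixed $\alpha$, the Shannon entropy $S(\tilde{p})$ is Schur-concave on the polytope $\{\tilde{p}:\tilde{p}(x)\leq p^X(x)/\alpha,\,\sum_x\tilde{p}(x)=1\}$, so its minimum is attained at an extreme vertex. Concretely, the minimizer is obtained by sorting the states in decreasing order of $p^X$ and setting $\tilde{p}(x_i)=p^X(x_i)/\alpha$ for the top indices until the total mass first reaches $1$, with the residual placed on one further index. Writing $\alpha S(\tilde{p})=-\sum_x q(x)\ln q(x)+\alpha\ln\alpha$ with $q(x):=\alpha\tilde{p}(x)$, an envelope/Lagrangian computation gives $\diff[\alpha S(\tilde{p})]/\diff\alpha=\ln(\alpha/c)\geq 0$ along this branch, where $c$ is the residual-mass value. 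Hence the minimum occurs at the boundary $\alpha=1-\Wass$, and defining $\tilde{p}_\Wass^X$ as the corresponding minimizer produces the asserted inequality.

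The principal obstacle is this last step --- proving monotonicity in $\alpha$ of the minimized objective. The Schur-concavity/vertex argument at fixed $\alpha$ is standard, but the envelope computation must be combined with the observation that the active set of saturated indices changes discretely as $\alpha$ varies, so one has to verify that $\alpha S(\tilde{p})$ remains non-decreasing across these transitions. The intuition is clean --- decreasing $\alpha$ simultaneously shrinks the prefactor and loosens the per-coordinate cap, and both effects reduce $\alpha S(\tilde{p})$ --- but formalizing this along the piecewise-smooth trajectory of the minimizer is the core technical content of the proof.
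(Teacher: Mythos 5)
Your proposal reproduces the paper's argument essentially step for step: the decomposition $I_\tau^{X:Y}=S(p^X)-\sum_y p_\tau^Y(y)\,S(p_\tau^{X|y})$, the Wasserstein lower bound $\Wass\geq 1-p_\tau^Y(1)$ (which follows because the support of $p_0^{XY}$ forces $\pi[(x,y),(x,y')]=0$ for $y'\neq1$, so $d[(x,y),(x,1)]\geq 1$ for $y\neq1$), dropping the nonnegative $y\neq1$ terms, the box constraint $\alpha\,\tilde p(x)\leq p^X(x)$, and the majorization/Schur-concavity characterization of the minimizer of $S$ over this polytope. Your derivative $\diff[\alpha S(\tilde p)]/\diff\alpha=\ln(\alpha/c)\geq 0$ is correct and matches the paper's computation $\diff S(\tilde p_\Wass^X)/\diff\Wass=[S(\tilde p_\Wass^X)+\ln\tilde p_\Wass^X(N(\Wass))]/(1-\Wass)$ under $\alpha=1-\Wass$. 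The one place you diverge from the paper is the monotonicity step you flag as the "core technical content": rather than an envelope/breakpoint analysis, the paper proves directly that $\mathcal V<\mathcal V'$ implies $\tilde p_{\mathcal V}^X\prec\tilde p_{\mathcal V'}^X$ (a short partial-sum comparison that needs no case distinction on $N$), applies Schur-concavity again to get $S(\tilde p_{\mathcal V}^X)\geq S(\tilde p_{\mathcal V'}^X)$, and then finishes with the one-line algebraic estimate
\[
I_{p^X}(\mathcal V')-I_{p^X}(\mathcal V)=(1-\mathcal V)S(\tilde p_{\mathcal V}^X)-(1-\mathcal V')S(\tilde p_{\mathcal V'}^X)\geq(\mathcal V'-\mathcal V)\,S(\tilde p_{\mathcal V'}^X)>0,
\]
which sidesteps the piecewise-smooth structure entirely. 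Your route is also valid once you supplement the pointwise derivative with continuity of $\alpha\mapsto\alpha S(\tilde p^*(\alpha))$ across the finitely many values where the active set changes (a direct check, also done in the paper's Proposition A2), but the majorization version is cleaner since it reuses the Schur-concavity machinery you already invoked at fixed $\alpha$.
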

\begin{figure*}[tbp]
\centering
\includegraphics[keepaspectratio, scale=0.53]
      {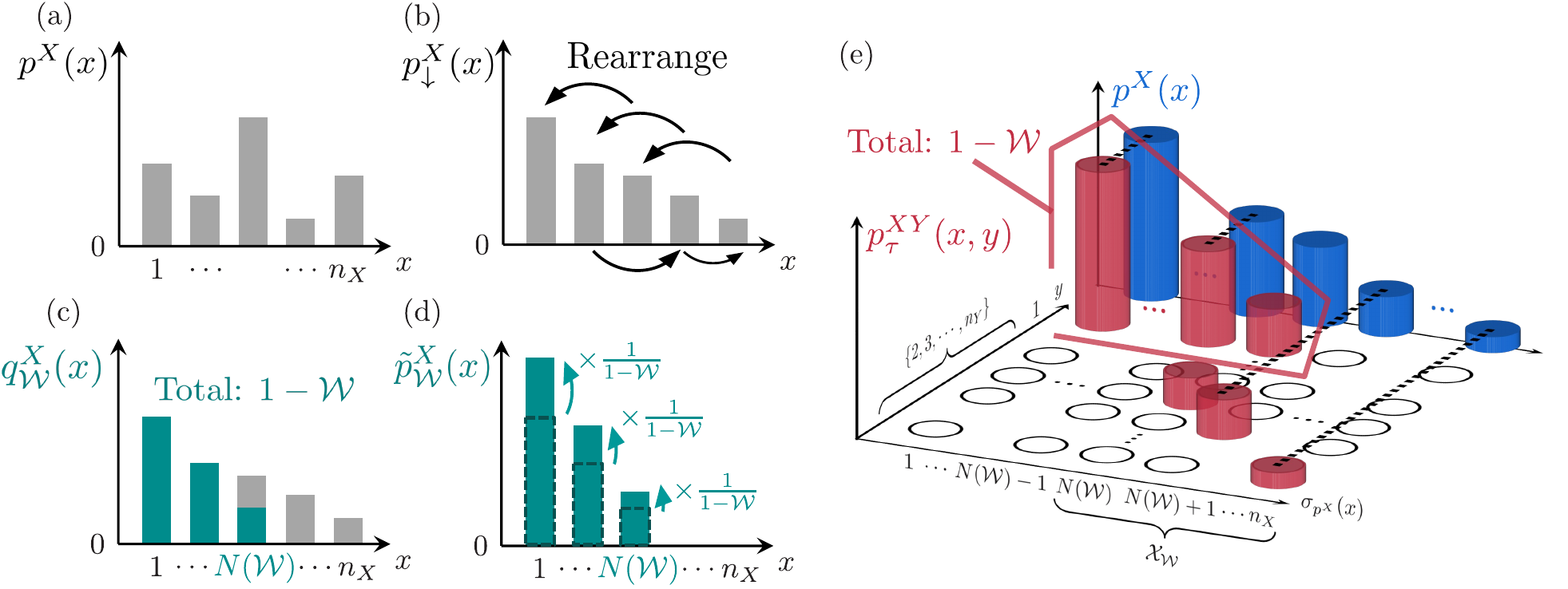}
 \caption{(a) The histogram of $p^X$. (b) The histogram of $p_\downarrow^X$ obtained by rearranging $p^X$ in the descending order. (c) The construction of $q_\Wass^X$. The gray bars represent the histogram of $p_\downarrow^X$, and the green bars represent that of $q_\Wass^X$. (d) The construction of $\tilde{p}_\Wass^X$. The green dots show the histogram of $q_\Wass^X$, and the green bars are that of $\tilde{p}_\Wass^X$, obtained by multiplying each component of $q_\Wass^X$ by $1/(1-\Wass)$. (e) Illustration of the optimal distribution $p_\tau^{XY}$ and $p^X$. The horizontal axes represent the states, and the vertical axis represents the probability. The thick dashed lines indicate equal probabilities.}
 \label{fig_2}
\end{figure*}
We can graphically define $\tilde{p}_\Wass^X$ starting from the histogram of probability distribution $p^X$ which is shown in Fig. \ref{fig_2} (a) (see Appendix A for the fully rigorous definition of $\tilde{p}_\Wass^X$ and the proof of the theorem). First, let $p^X_\downarrow$ be the probability distribution obtained by rearranging $p^X$ in the descending order $p_\downarrow^X(1)\geq p_\downarrow^X(2)\geq\cdots\geq p_\downarrow^X(n_X)$ (Fig. \ref{fig_2} (b)). Next, construct the distribution $q_\Wass^X$ by extracting probabilities from the distribution $p_\downarrow^X$ in descending order until their total sums up to $1-\Wass$ (Fig. \ref{fig_2} (c)). Here, we define $N(\Wass)$ as the minimal $x$ such that $q_\Wass^X(x)<p_\downarrow^X(x)$. Finally, we obtain the probability distribution $\tilde{p}_\Wass^X$ by multiplying each component of $q_\Wass^X$ by $1/(1-\Wass)$ (Fig. \ref{fig_2} (d)). We define $\tilde{p}_\Wass^X$ as $\tilde{p}_\Wass^X(1)=1$ for $\Wass\geq1$.\par
From this definition, it can be seen that as $\Wass$ increases, the nonzero components of $\tilde{p}_\Wass^X$ concentrate on smaller values of $x$. Since the Shannon entropy takes smaller value for skewed probability distribution, it can be intuitively understood that $S\left(\tilde{p}_\Wass^X\right)$ is a decreasing function of $\Wass$ (for a rigorous proof, see Appendix A). Therefore, the right-hand side $I_{p^X}(\Wass)$ of inequality (\ref{eq_9}) is an increasing function of $\Wass$, indicating that a larger Wasserstein distance $\Wass$ allows $Y$ to obtain more mutual information from $X$.
\subsection{The conditions for achieving our bound}\label{subsec_3B}
We next discuss the optimal protocol that achieves the equality in (\ref{eq_9}). First, The conditions that memory $Y$ should satisfy for achieving the equality is listed as follows: 
\begin{itemize}
\item[(C1)] $n_Y-1\geq n_X-N(\Wass)+1$,
\item[(C2)] For any $x$ and $y\ (\neq1)$, the direct transition from $(x,1)$ to $(x,y)$ is allowed (i.e., $(1,y)\in\mathcal{N}_x$).
\end{itemize}
We can then achieve the equality by setting the final distribution $p_\tau^{XY}$ to an optimal distribution. We here only focus on describing the construction protocol itself; the proof for its optimality is given in Appendix A.\par
As preparation, we define some notations. First, denote the permutation that rearranges the states \( x \) in descending order according to the probability distribution \( p^X \) as \( \sigma_{p^X} \). \( \sigma_{p^X}(x) = n \) implies that \( x \) has the \( n \)-th largest value of \( p^X(x) \). Next, define \( \mathcal{X}_\Wass \) as the set of states \( x \) that come after the \( N(\Wass) \)-th position when sorted in descending order of $p^X$, i.e., $\mathcal{X}_\Wass\coloneqq\{x| N(\Wass)\leq\sigma_{p^X}(x)\leq n_X\}$. Finally, for each \( x \in \mathcal{X}_\Wass \), select a corresponding \( y_x \) from the set $\{y|y\neq1\}$, ensuring that each \( y_x \) has one-to-one correspondence with each \( x \).\par
Then, the optimal final distribution is given by
\begin{align}
&p_\tau^{XY}(x,y)\notag\\
&\quad=
\begin{cases}
  q_\Wass^X\left(\sigma_{p^X}(x)\right), & y=1, \\
  p^X\left(\sigma_{p^X}(x)\right)-q_\Wass^X\left(\sigma_{p^X}(x)\right), & y=y_x,\ x\in\mathcal{X}_\Wass, \\
  0, & \mathrm{otherwise.}
\end{cases}\label{eq_10}
\end{align}
This distribution is illustrated in Fig. \ref{fig_2} (e). From Eq. (\ref{eq_10}) and Fig. \ref{fig_2} (e), it is understood that the optimal final distribution is constructed by extracting probabilities from the largest values of \( p^X(x) \) in sequence until the total reaches \( \mathcal{W} \), assigning these probabilities to different \( y_x \) for each \( x \), and arranging the rest along \( y=1 \). We here note that condition (C1) ensures the one-to-one correspondence between $x\in\mathcal{X}_\Wass$ and $y\in\{y|y\neq1\}$. The condition (C2) ensures $d(y,1|x)=1$, which means that there is no increase in the Wasserstein distance due to the passage of additional states during the transport from $p_0^{XY}$ to the optimal distribution $p_\tau^{XY}$.
\par
%
%
%
%
%
%
\subsection{Speed limit for fixed mobility}\label{subsec_3C}
From the Main Theorem, we can obtain the fundamental bound on the thermodynamic cost $\Sigma^{Y}_\tau$ for obtaining mutual information $I_\tau^{X:Y}$ in the measurement process. We first note that $I_{p^X}(\Wass)$ is strictly monotonically increasing for $0\leq\Wass\leq1-p_\downarrow^X(1)$ and takes a constant value $S\left(p^X\right)$ for $\Wass\geq1-p_\downarrow^X(1)$ (see Supplemental Materials for the proof). Thus we can define the inverse function $\Wass_{p^X}:[0,S(p^X)]\to[0,1-p_\downarrow^X(1)]$ for the fixed $p^X$ as $I_{p^X}\left(\Wass_{p^X}(I)\right)=I$.
From this and inequality (\ref{eq_1}) (or equivalently, inequality (\ref{eq_8})), we obtain the speed limit of the measurement process for fixed mobility
\begin{equation}
\Sigma_\tau^Y\geq I_\tau^{X:Y}+\frac{{\Wass_{p^X}\left(I_\tau^{X:Y}\right)}^2}{\tau\langle m\rangle_\tau},\label{eq_11}
\end{equation}
which gives a lower bound on the thermodynamic cost for obtaining a given amount of information $I_\tau^{X:Y}$. Inequality (\ref{eq_11}) specifies the optimal value of $\Wass$ depending on $I_\tau^{X:Y}$. 
\par
The equality in (\ref{eq_11}) can be achieved with finite operation time $\tau$ by simultaneously achieving the equalities in (\ref{eq_9}) and (\ref{eq_8}). This is possible as follows. First, the equality in (\ref{eq_9}) can be achieved by preparing a memory $Y$ that satisfies conditions (C1) and (C2) and setting $p_\tau^{XY}$ according to Eq. (\ref{eq_10}). We can then construct $\{R_t(y, y'|x)\}$ that achieves the equality in (\ref{eq_8}) for the fixed $p_0^{XY}$ and $p_\tau^{XY}$ by transporting probabilities under uniform and constant thermodynamic force along the optimal transport plan. \par
%
%
%
%
%
%
\subsection{Speed limit for fixed activity}\label{subsec_3D}
When we choose time-averaged activity $\langle a\rangle_\tau$ as $D$, the second term of the right hand side of inequality (\ref{eq_1}) becomes $2\Wass\tanh^{-1}\left[\Wass/(\tau\langle a\rangle_\tau)\right]$, which is an increasing function of $\Wass$ for any fixed $\langle a\rangle_\tau$ and $\tau$. Therefore, we can also apply the Main Theorem to this case and obtain another speed limit
\begin{equation}
\Sigma_\tau^Y\geq I_\tau^{X:Y}+2\Wass_{p^X}\left(I_\tau^{X:Y}\right)\tanh^{-1}\frac{\Wass_{p^X}\left(I_\tau^{X:Y}\right)}{\tau\langle a\rangle_\tau},\label{eq_act}
\end{equation}
which gives the lower bound on the thermodynamic cost for obtaining information $I_\tau^{X:Y}$ for fixed activity $\langle a\rangle_\tau$.\par
The equality in (\ref{eq_act}) can be achieved with finite operation time $\tau$ by simultaneously achieving the equalities in (\ref{eq_9}) and (\ref{eq_8}), i.e., by preparing a memory $Y$ that satisfies conditions (C1) and (C2), setting $p_\tau^{XY}$ according to Eq. (\ref{eq_10}), and transporting probabilities under uniform and constant activity and probability current along the optimal transport plan from $p_0^{XY}$ to $p_\tau^{XY}$.
%
%
%
%
%
%
\section{Example: Double quantum dots}\label{sec_4}
\subsection{Setup}\label{subsec_4A}
We next give an experimentally feasible setup that approximately achieves the minimum energy cost determined by inequality (\ref{eq_11}). We consider two coupled single-level quantum dots attached to reservoirs \cite{Sanchez_PhysRevLett.104.076801_2010,Schaller_PhysRevB.82.041303_2010,Bulnes_PhysRevB.84.165114_2011,Kutvonen_PhysRevE.93.032147_2016} (Fig. \ref{fig_4} (a)). 
\begin{figure}[tbp]
\centering
\includegraphics[scale=0.47]{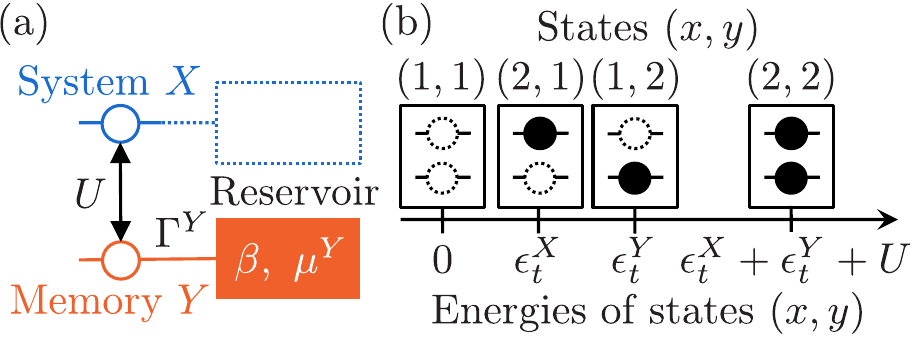}
\caption{(a) A schematic illustration of coupled quantum dots. (b) Energy diagram of the total system.}
\label{fig_4}
\end{figure}
One is the memory $Y$ with energy $\epsilon_t^Y$ at time $t$, and the other is the measured system $X$ with energy $\epsilon_t^X$ at time $t$. $X$ (resp. $Y$) takes states $x=1,\ 2$ (resp. $y=1,\ 2$) corresponding to the unoccupied and occupied states, respectively. Coulomb energy $U$ takes place when both dots are occupied (see Fig. \ref{fig_4} (b)).\par
To prevent transitions of $X$, let $X$ not be in contact with a reservoir during the measurement, while $Y$ attaches to the reservoir at inverse temperature $\beta$ and chemical potential $\mu^Y$ with coupling strength $\Gamma^Y$, whose Fermi distribution is denoted as $f^Y(\epsilon)\coloneqq\left[1+e^{\beta(\epsilon-\mu^Y)}\right]^{-1}$. It is possible to effectively prevent transitions of system $X$ caused by the reservoir without physically detaching $X$ from the reservoir, but by adjusting the coupling strength $\Gamma^X$ that depends on the tunneling rate between the reservoir and the quantum dot. Since the tunneling rate can be tuned by adjusting the gate voltage of the reservoir, it is feasible to make $\Gamma^X$ significantly smaller than $\Gamma^Y$.\ The dynamics of the total system obeys Eq. (\ref{eq_2}), where the transition rates are given by
\begin{equation}
\begin{split}
R_t(2,1|x)&=\Gamma^Yf^Y\left(\epsilon_t^Y+\delta_{x,2}U\right),\\
R_t(1,2|x)&=\Gamma^Y\left[1-f^Y\left(\epsilon_t^Y+\delta_{x,2}U\right)\right].
\end{split}\label{eq_12}
\end{equation}
Equality (\ref{eq_12}) satisfies the local detailed balance condition (\ref{eq_3}).\par
We next consider the conditions (C1) and (C2). First, in inequality (\ref{eq_11}), $\Wass\in[0,1-p_\downarrow^X(1)]$ holds, implying $N(\Wass)\geq2$. Therefore, (C1) is satisfied if $n_Y\geq n_X$, which is true in this example because $n_X=n_Y=2$. The condition (C2) is also satisfied because the transitions between $y = 1$ and $2$ are allowed regardless of $x$.
\subsection{Approximately optimal measurement protocol for fixed mobility}\label{subsec_4B}
First, we consider the case where the mobility $\langle m\rangle_\tau$ is fixed. We fix $p^X(1)=p\ (\leq1/2)$ and $p^X(2)=1-p$, and measure the states of $X$ in finite time $\tau$ by the memory $Y$ which initially takes $y=1$. Denoting the probability distribution $p_t^{XY}$ by a matrix $[p_t^{XY}(x,y)]_{x,y}$, for $\Wass\in[0,1-p_\downarrow^X(1)]=[0,p]$, the initial and final optimal distributions are given by
\begin{equation}
p_0^{XY}=\left[\begin{array}{cc}
p & 1-p \\
0 & 0
\end{array}\right],\quad
p_\tau^{XY}=\left[\begin{array}{cc}
p-\Wass & 1-p \\
\Wass & 0
\end{array}\right].\label{eq_13}
\end{equation}
These probability distributions achieve the equality in (\ref{eq_9}). Then, the minimum energy cost determined by inequality (\ref{eq_11}) is achieved by the optimal protocol $\{R_t(y,y'|x)\}$ that achieves the equality of (\ref{eq_8}) for $p_0^{XY}$ and $p_\tau^{XY}$.\par
In this example, we can approximately implement the optimal protocol as follows. First, let $\beta U$ be sufficiently large to prevent the transition from state $(2,1)$ to $(2,2)$ (corresponding to the second columns of Eq. (\ref{eq_13})). Then, wait for a positive probability $p\Delta$ to be stored in the state $(1,2)$. This process requires non-optimal energy cost, which can be made arbitrarily small by setting $\Delta$ to arbitrarily small value. Finally, send the required probability from state $(1,1)$ to $(1,2)$ with constant thermodynamic force $F$ by manipulating the energy $\epsilon_t^Y$ (corresponding to the first columns of Eq. (\ref{eq_13})). See Supplemental Materials for the detailed protocol. We here note that we cannot construct such a protocol when $\Wass=p$, but we can set $\Wass$ arbitrarily close to $p$. The approximately optimal protocol converges to the exactly optimal one in the limits $\beta U\to\infty$ and $\Delta\to0$.\par
\begin{figure*}[tbp]
\centering
\includegraphics[scale=0.7]{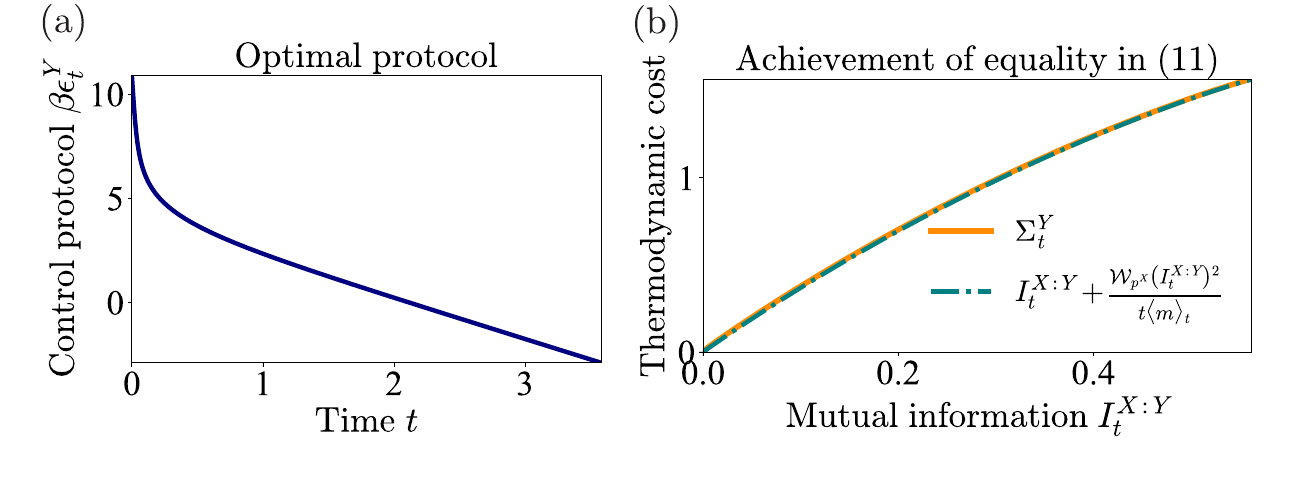}
\caption{Numerical results on the example with two quantum dots. The parameters are set to $\beta U=20$, $F=4$, $\mu^Y=0$, $\Gamma^Y=2$, $\Delta=0.001$. (a) The optimal protocol $\epsilon_t^Y$ scaled by $\beta$, which is obtained by numerically solving the condition for maintaining constant thermodynamic force $F$. (b) Comparison of the left- and right-hand side of inequality (\ref{eq_11}). We see that the protocol shown in (a) approximately (but almost exactly) achieves the equality in (\ref{eq_11}) for any time $t$.}
\label{fig_5}
\end{figure*}
The numerical demonstration of the control protocol of $\epsilon_t^Y$ for $p = 0.25$ and $\Wass=0.999p$ is shown in Fig. \ref{fig_5} (a), and Fig. \ref{fig_5} (b) compares the energy cost $\Sigma_t^Y$ at time $t$ with the right-hand side of (\ref{eq_11}) for $\tau=t$. Figure \ref{fig_5} (b) shows that the equality in (\ref{eq_11}) is approximately (but almost exactly) achieved at any time $t$, and and thus this example can be regarded as the optimal information gain in finite time.
\subsection{Approximately optimal measurement protocol for fixed activity}\label{subsec_4C}
We next consider the case where the activity $\langle a\rangle_\tau$ is fixed. In this case, by varying not only $\epsilon_t^Y$ but also $\Gamma^Y=\Gamma_t^Y$ over time $t$, we can approximately construct an optimal measurement protocol as follows. As a preparation, set the initial and final distributions as Eq. (\ref{eq_13}) and let $\beta U$ be sufficiently large. Then, wait for a positive probability $p\Delta$ to be stored
in the state $(1, 2)$, which accompanies non-optimal energy cost which can be made arbitrarily small by setting $\Delta$ to arbitrarily small value. These parts are the same as the case for fixed mobility.\par
Starting from the distribution $p_t^{XY}(1,2)=p\Delta(>0)$, we can transprot probability from state $(1,2)$ to $(2,2)$ under constant activity $a$ and probability current $J$ by properly controlling $\epsilon_t^Y$ and $\Gamma_t^Y$ (explicit protocol is provided in Appendix C). In this protocol, it is not possible to set $p_\tau^{XY}(1,2)=0$. Therefore, we cannot completely optimize the final distribution for $\Wass=p$. Nevertheless, by adjusting parameters, it is possible to set the final distribution to one arbitrarily close to the optimal distribution in the case $\Wass=p$. This approximately optimal protocol also converges to the exactly optimal one in the limits $\beta U\to\infty$ and $\Delta\to0$.\par
\begin{figure*}[tbp]
\centering
\includegraphics[scale=0.7]{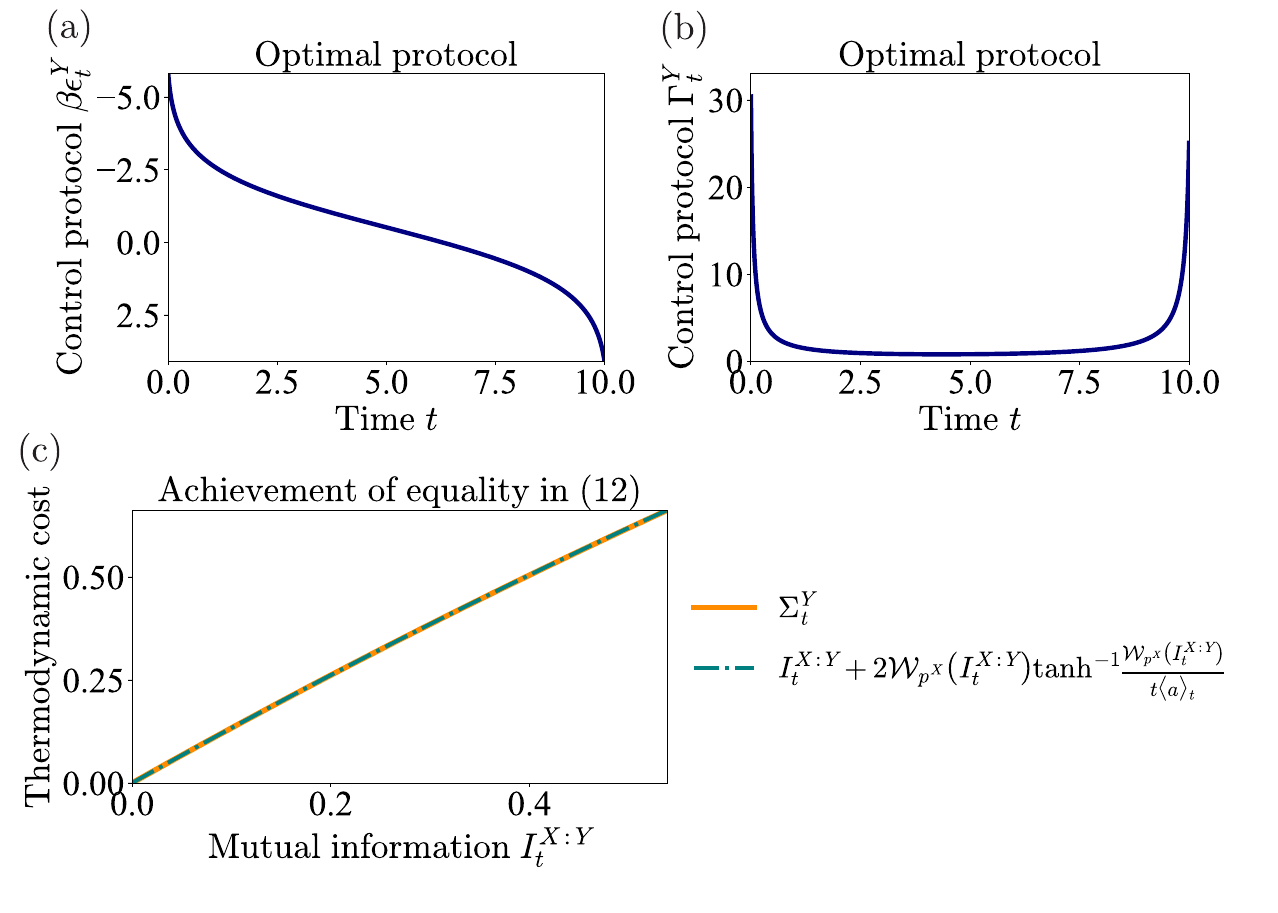}
\caption{Numerical results on the example with two quantum dots. The parameters are set to $\beta U=20$, $a=0.1$, $\tau=10$, $\mu^Y=$, $\Delta=0.005$. (a) The optimal protocol $\epsilon_t^Y$ scaled by $\beta$, which is obtained by numerically solving the condition for maintaining constant activity $a$ and probability current $J$. (b) The optimal protocol $\Gamma_t^Y$. (c) Comparison of the left- and right-hand side of inequality (\ref{eq_act}). We see that the protocols shown in (a) and (b) approximately (but almost exactly) achieves the equality in (\ref{eq_act}) for any time $t$.}
\label{fig_6}
\end{figure*}
The numerical demonstration of the control protocols of $\epsilon_t^Y$ and $\Gamma_t^Y$ for $p = 0.25$ and $\Wass=0.999p$ are shown in Fig. \ref{fig_6} (a) and (b) respectively, and Fig. \ref{fig_6} (c) compares the energy cost $\Sigma_t^Y$ at time $t$ with the right-hand side of (\ref{eq_act}) for $\tau=t$. It is shown that the equality in (\ref{eq_act}) is approximately (but almost exactly) achieved at any time $t$.
%
%
%
%
%
%
\section{Discussion}\label{sec_5}
In this study, we revealed the fundamental bound on the thermodynamic cost for obtaining a given amount of information in finite time, and a specific protocol for achieving it. The Main Theorem is inequality (\ref{eq_9}) derived from optimal transport theory, which leads to the speed limits (\ref{eq_11}) and (\ref{eq_act}). The equalities can be achieved by properly designing memory and protocols. Moreover, we showed that such an optimal protocol can be approximately implemented by the coupled quantum dots system.\par
The optimal measurement protocol proposed in this study is general and model-independent, and therefore would lead to fast and cost-effective information gain in a wide range of information processing, possibly including calculations using CMOS devices \cite{Freitas_PhysRevX.11.031064_2021,Freitas_PhysRevLett.129.120602_2022,Freitas_PhysRevE.107.014136_2023}. Our result would also be regarded as a step toward clarifying the finite-time effects on thermodynamics in various information processing in terms of optimal transport theory.\par
As future perspectives, it might be possible to generalize our result to discrete infinite cases, but there should be mathematical subtle points to be considered. It is also noteworthy that the generalization the Langevin case is nontrivial and important future issue.
%
%
%
%
%
%
\section*{ACKNOWLEDGEMENTS}We thank Toshihiro Yada and Yosuke Mitsuhashi for valuable discussions. We are also grateful to Sosuke Ito for insightful comments on the manuscript. R.N. is supported by the World-leading Innovative Graduate Study Program for Materials Research, Industry, and Technology (MERIT-WINGS) of the University of Tokyo. T.S. is supported by Japan Society for the Promotion of Science (JSPS) KAKENHI Grant No. JP19H05796, JST, CREST Grant No. JPMJCR20C1 and JST ERATO-FS Grant No. JPMJER2204. T.S. is also supported by the Institute of AI and Beyond of the University of Tokyo and JST ERATO Grant No. JPMJER2302, Japan. 
%
%
%
%
%
%
\let\oldaddcontentsline\addcontentsline
\renewcommand{\addcontentsline}[3]{}

\begin{center}
\end{center}
\setcounter{equation}{0}
\setcounter{figure}{0}
\setcounter{section}{0}
\setcounter{subsection}{0}
\setcounter{table}{0}
\setcounter{page}{1}
\makeatletter
\renewcommand{\theequation}{A\arabic{equation}}
\renewcommand{\thefigure}{A\arabic{figure}}
\renewcommand{\thesubsection}{\arabic{subsection}}
\renewcommand{\theprop}{A\arabic{prop}}
\renewcommand{\thelem}{A\arabic{lem}}
\addtocontents{toc}{\protect\setcounter{tocdepth}{0}}
{
}
\begin{appendix}
  \section{Derivation of the main theorem and the speed limit in the main text}\label{sec_A}
%
%
%
%
%
%
\subsection{Majorization}\label{subsec_A1}
First, we introduce the \textit{majorization}, which will be used to obtain an upper bound on the mutual information. Let $p^X$ and $q^X$ be probability distributions on $\mathcal{X}$. We state that $p^X$ majorizes $q^X$, denoted as $q^X\prec p^X$, if
\begin{equation}
\forall x\in\mathcal{X},\ \sum_{x'=1}^{x}q_\downarrow^X(x')\leq \sum_{x'=1}^{x}p_\downarrow^X(x')
\end{equation}
holds [50], where $p_\downarrow^X$ and $q_\downarrow^X$ are the probability distributions obtained by rearranging $p^X$ and $q^X$ in descending order $p_\downarrow^X(1)\geq p_\downarrow^X(2)\geq\cdots\geq p_\downarrow^X(n_X)$ and $q_\downarrow^X(1)\geq q_\downarrow^X(2)\geq\cdots\geq q_\downarrow^X(n_X)$, respectively. The statement that $p^X$ majorizes $q^X$ implies that $q^X$ is more randomly (or uniformly) distributed than $p^X$.\par
If $q^X\prec p^X$, for any convex function $f$, 
\begin{equation}
\sum_{x\in\mathcal{X}}f\left(q^X(x)\right)\leq\sum_{x\in\mathcal{X}}f\left(p^X(x)\right)
\end{equation}
holds. By choosing $f(x)=x\ln x$, we obtain the monotonicity of the Shannon entropy
\begin{equation}
q^X\prec p^X\implies S(q^X)\geq S(p^X).\label{eq_S3}
\end{equation}
%
%
%
%
%
%
\subsection{The lower bound on the Wasserstein distance}\label{subsec_A2}
The Wasserstein distance between probability distributions $p_0^{XY}$ and $p_\tau^{XY}$ is defined as
\begin{align}
&\Wass\left(p_0^{XY},p_t^{XY}\right)\notag\\
&=\min_{\pi\in\Pi(p_0^{XY},p_t^{XY})}\sum_{x,y,y'}d[(x,y),(x,y')]\pi[(x,y),(x,y')],\label{eq_S6}
\end{align}
where
\begin{align}
&\Pi(p_0^{XY},p_t^{XY})\notag\\
&=\left\{\pi\in\mathbb{R}^{n_Xn_Y\times n_Xn_Y}_{\geq 0}\left|\sum_{y\in\mathcal{Y}}\pi[(x,y),(x,y')]=p_0^{XY}(x,y'),\ \right.\right.\notag\\
&\qquad\left.\left.\sum_{y'\in\mathcal{Y}}\pi[(x,y),(x,y')]=p_t^{XY}(x,y)\right.\right\}.\label{eq_S7}
\end{align}
Taking the initial distribution $p_0^{XY}=p^X(x)\delta_{y,1}$, $\sum_{y}\pi[(x,y),(x,y')]=p_0^{XY}(x,y')=p^X(x)\delta_{y',1}=0$ holds for $y\neq 1$, which implies that
\begin{equation}
\forall y\neq1,\ \pi[(x,y),(x,y')]=0.\label{eq_S6.5}
\end{equation}
Therefore, we obtain
\begin{equation}
p_t^{XY}(x,y)=\sum_{y'}\pi[(x,y),(x,y')]=\pi[(x,y),(x,1)].\label{eq_S7.5}
\end{equation}
By substituting Eqs. (\ref{eq_S6.5}) and (\ref{eq_S7.5}) into Eq. (\ref{eq_S6}), we obtain 
\begin{align}
&\Wass\left(p_0^{XY},p_t^{XY}\right)\notag\\
&=\min_{\pi\in\Pi(p_0^{XY},p_t^{XY})}\sum_{x,y}d[(x,y),(x,1)]\pi[(x,y),(x,1)]\notag\\
&=\sum_{x,y}d[(x,y),(x,1)]p_t^{XY}(x,y),
\end{align}
which means that the Wasserstein distance is obtained only from Eqs. (\ref{eq_S6.5}) and (\ref{eq_S7.5}) without directly solving the minimization problem. We note that
\begin{equation}
d[(x,y),(x,1)]
\begin{cases}
=0, & y=1,\\
\geq 1, & y\neq 1,
\end{cases}\label{eq_S11}
\end{equation}
which yields the following lower bound on the Wasserstein distance:
\begin{align}
  \Wass\left(p_0^{XY},p_t^{XY}\right)&\geq\sum_{x,y:y\neq1}p_t^{XY}(x,y)\notag\\
  &=\sum_{y(\neq1)}p_t^{Y}(y)\notag\\
  &=1-p_t^Y(1)\notag\\
  &\eqqcolon \mathcal{V}_t,
\end{align}
where the equality is achieved if $d[(x,y),(x,1)]=1$ holds for all $y\neq1$. This condition is equivalent to the condition (C1). 
%
%
%
%
%
%
\subsection{The upper bound on the mutual information by $\mathcal{V}_\tau$}\label{subsec_A3}
In this Subsection, we only consider the distributions at $t=\tau$.  Let $\sigma_{p^X}:\mathcal{X}\to\mathcal{X}$ be the permutation defined as
\begin{equation}
\forall x\in\mathcal{X},p^X(x)=p^X_\downarrow(\sigma_{p^X}(x)),
\end{equation}
which rearranges indexes $x$ in the descending order of $p^X$. We note that the inverse permutation of $\sigma_{p^X}$ satisfies $p^X({\sigma_{p^X}}^{-1}(x))=p_\downarrow^X(x)$. Then, we obtain the following Proposition.
\begin{prop}\label{prop_S1}
For any fixed $\mathcal{V}=\mathcal{V}_\tau$, there exists a probability distribution $\tilde{p}_{\mathcal{V}}^X$ such that
\begin{equation}
I_\tau^{X:Y}\leq S(p^X)-(1-{\mathcal{V}})S(\tilde{p}^X_\mathcal{V})\eqqcolon I_{p^X}({\mathcal{V}}).\label{eq_S14}
\end{equation}
\end{prop}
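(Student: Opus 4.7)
The plan is to bound the mutual information from above by rewriting it via the conditional entropy decomposition and then exploiting majorization to minimize the relevant conditional entropy under the constraint imposed by the marginal $p^X$.

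First I would write
\begin{equation}
I_\tau^{X:Y}=S(p^X)-H_\tau(X|Y)=S(p^X)-\sum_{y}p_\tau^Y(y)\,S\!\left(p_\tau^{X|Y=y}\right),
\end{equation}
where $p_\tau^{X|Y=y}(x)=p_\tau^{XY}(x,y)/p_\tau^Y(y)$. Splitting the sum into the $y=1$ term (with weight $p_\tau^Y(1)=1-\mathcal{V}$) and the remaining $y\neq 1$ terms, and discarding the latter using non-negativity of entropy, gives the preliminary bound
\begin{equation}
I_\tau^{X:Y}\leq S(p^X)-(1-\mathcal{V})\,S\!\left(p_\tau^{X|Y=1}\right).
\end{equation}
So the task reduces to showing that $S(p_\tau^{X|Y=1})\geq S(\tilde p^X_\mathcal{V})$ for every admissible $p_\tau^{XY}$, which upon choosing $\tilde p^X_\mathcal{V}$ as in the main text yields (\ref{eq_S14}).

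Next I would exploit the marginal constraint. Since $p_\tau^{XY}(x,1)\leq p_\tau^X(x)=p^X(x)$, the conditional distribution $q(x)\coloneqq p_\tau^{X|Y=1}(x)$ satisfies the componentwise bound $(1-\mathcal{V})\,q(x)\leq p^X(x)$ for all $x$. I would then argue that among all distributions $q$ obeying this constraint, the one with the smallest Shannon entropy is $\tilde p^X_\mathcal{V}$. This is a majorization claim: I want to show $q\prec\tilde p^X_\mathcal{V}$, from which $S(q)\geq S(\tilde p^X_\mathcal{V})$ follows by the monotonicity (\ref{eq_S3}) applied to the convex function $f(t)=t\ln t$. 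Concretely, for the sorted distribution $q_\downarrow$ and any $k$, permutation-invariance of sums gives
\begin{equation}
\sum_{n=1}^k q_\downarrow(n)\leq \frac{1}{1-\mathcal{V}}\sum_{n=1}^k p^X_\downarrow(n),
\end{equation}
because the $k$ largest values of $q$ live at some $k$ indices of $\mathcal{X}$, and the corresponding sum of $p^X$-values is maximized by picking the $k$ largest. Combining this with the trivial bound $\sum_{n=1}^k q_\downarrow(n)\leq 1$, one recognizes the right-hand side as exactly the cumulative distribution of $\tilde p^X_{\mathcal{V},\downarrow}$, namely $\min\!\left(\sum_{n=1}^k p^X_\downarrow(n)/(1-\mathcal{V}),\,1\right)$. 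This verifies the majorization $q\prec\tilde p^X_\mathcal{V}$.

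The main obstacle, as I see it, is the careful treatment of the sorting: the componentwise constraint $(1-\mathcal{V})q(x)\leq p^X(x)$ is stated in the original labeling of $\mathcal{X}$, but the majorization condition must be verified for the descending rearrangement of $q$, whose ordering is a priori unrelated to that of $p^X$. The argument above bridges this gap by using that summing any $k$ values of $p^X$ is bounded above by the sum of the $k$ largest, so the constraint passes cleanly to the sorted version; handling the saturation at $1-\mathcal{V}$ (i.e.\ the role of $N(\mathcal{V})$) in the construction of $q^X_\mathcal{V}$ and $\tilde p^X_\mathcal{V}$ then simply amounts to noting that the cumulative sums are capped at $1$. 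Once this majorization is established, the proof of Proposition~\ref{prop_S1} terminates immediately by plugging $S(p_\tau^{X|Y=1})\geq S(\tilde p^X_\mathcal{V})$ back into the preliminary bound.
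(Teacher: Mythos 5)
Your proposal is correct and follows essentially the same route as the paper: decompose $I_\tau^{X:Y}$ via conditional entropy, drop the $y\neq 1$ terms, derive the componentwise constraint $(1-\mathcal{V})\,p_\tau^{X|Y=1}(x)\leq p^X(x)$, and conclude $p_\tau^{X|Y=1}\prec\tilde p^X_\mathcal{V}$ to invoke Schur-concavity of Shannon entropy. The only cosmetic difference is that you verify the majorization directly (cumulative sums bounded by $\min\bigl(\sum_{n\le k}p^X_\downarrow(n)/(1-\mathcal{V}),\,1\bigr)$) whereas the paper argues by contradiction, and the paper additionally spells out the degenerate case $p_\tau^Y(1)=0$, which your argument should note is trivial since the entropy term is then multiplied by zero.
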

Here, $\tilde{p}_{\mathcal{V}}^X$ can be defined as follows. First, let $N(\mathcal{V})$ be the minimum integer such that $1-\mathcal{V}<\sum_{x=1}^Np^X_\downarrow(x)$ for $\mathcal{V}>0$, and $N(0)\coloneqq n_X$ for $\mathcal{V}=0$. Then, we define $\tilde{p}^X_\mathcal{V}(x)\coloneqq\delta_{\sigma_{p^X}(x),1}$ for $N(\mathcal{V})=1$ and
\begin{equation}
\tilde{p}^X_\mathcal{V}(x)\coloneqq
\begin{cases}
p^X(x)/(1-\mathcal{V}), & \sigma_{p^X}(x)<N(\mathcal{V}),\\
1-\sum_{\sigma_{p^X}(x)=1}^{N(\mathcal{V})-1}\tilde{p}^X_\mathcal{V}(x), & \sigma_{p^X}(x)=N(\mathcal{V}),\\
0, & \mathrm{otherwise}\label{eq_S15}
\end{cases}
\end{equation}
for $N(\mathcal{V})>1$. This definition is equivalent to the graphical definition of $\tilde{p}_\Wass^X$ given in the main text.
\begin{proof}
We introduce the conditional probability distribution of $X$ given $y\in\mathcal{Y}$ at time $t=\tau$ denoted as $p_\tau^{X|y}$ and the conditional entropy of $X$ given $Y$ at time $t=\tau$ denotes as $S\left(p_\tau^{X|Y}\right)$:
\begin{equation}
p_\tau^{X|y}(x)\coloneqq\frac{p_\tau^{XY}(x,y)}{p_\tau^Y(y)},\quad S\left(p_\tau^{X|Y}\right)\coloneqq\sum_{y\in\mathcal{Y}}p_\tau^Y(y)S\left(p_\tau^{X|y}\right).
\end{equation}
Then, $S\left(p_\tau^{X|Y}\right)=S\left(p_\tau^{XY}\right)-S\left(p_\tau^{Y}\right)$ holds, which yields 
\begin{align}
I_\tau^{X:Y}&=S\left(p^{X}\right)-S\left(p_\tau^{X|Y}\right)\\
&= S\left(p^{X}\right)-\sum_{y\in\mathcal{Y}}p_\tau^Y(y)S\left(p_\tau^{X|y}\right)\\
&\leq S\left(p^{X}\right)-p_\tau^Y(1)S\left(p_\tau^{X|1}\right),\label{eq_S18}
\end{align}
where the equality holds if $S\left(p_\tau^{X|y}\right)=0$ for all $y\neq 1$. If $p_\tau^Y(1)=0$, the right-hand side of inequality (\ref{eq_S18}) is $S(p^X)$. Next, we consider the case $p_\tau^Y(1)\neq0$. Since
\begin{align}
p^X(x)&=\sum_yp_\tau^Y(y)p_\tau^{X|y}(x)\notag\\
&=(1-\mathcal{V})p_\tau^{X|1}(x)+\sum_{y(\neq1)}p_\tau^Y(y)p_\tau^{X|y}(x)
\end{align}
holds, the conditional probability $p_\tau^{X|1}(x)$ satisfies
\begin{equation}
0\leq p_\tau^{X|1}(x)\leq\frac{p^X(x)}{1-\mathcal{V}}.\label{eq_S17}
\end{equation}
We now prove $p_\tau^{X|1}\prec \tilde{p}^X_\mathcal{V}$ by contradiction. Suppose that
\begin{equation}
\exists x\in\mathcal{X},\ \sum_{x'=1}^xp_\tau^{X|1}\left(\sigma_{p_\tau^{X|1}}^{-1}(x')\right)>\sum_{x'=1}^x\tilde{p}^X_\mathcal{V}\left(\sigma_{\tilde{p}^X_\mathcal{V}}^{-1}(x')\right),\label{eq_S20}
\end{equation}
and let $x^*$ be such $x$. If $x^*<N(\mathcal{V})$, it contradicts the inequality obtained by summing inequality (\ref{eq_S17}) over $x$ in the range $\sigma_{p^{X|1}}(x)\leq x^*$:
\begin{align}
\sum_{x=1}^{x^*}p_\tau^{X|1}\left(\sigma_{p_\tau^{X|1}}^{-1}(x)\right)
&\leq\sum_{x=1}^{x^*}\frac{p^X\left(\sigma_{p_\tau^{X|1}}^{-1}(x)\right)}{1-\mathcal{V}}\notag\\
&\leq\sum_{x=1}^{x^*}\frac{p^X\left(\sigma_{p^{X}}^{-1}(x)\right)}{1-\mathcal{V}}\notag\\
&=\sum_{x=1}^{x^*}\frac{p^X\left(\sigma_{\tilde{p}^{X}_\mathcal{V}}^{-1}(x)\right)}{1-\mathcal{V}}\notag\\
&=\sum_{x=1}^{x^*}\tilde{p}^{X}_\mathcal{V}\left(\sigma_{\tilde{p}^{X}_\mathcal{V}}^{-1}(x)\right),
\end{align}
where we used $\sigma_{\tilde{p}^{X}_\mathcal{V}}=\sigma_{p^{X}}$, which follows from the definition of $\tilde{p}^{X}_\mathcal{V}$. If $x^*\geq N(\mathcal{V})$, we obtain
\begin{align}
 \sum_{x=1}^{x^*}p_\tau^{X|1}\left(\sigma_{p_\tau^{X|1}}^{-1}(x)\right)&>\sum_{x=1}^{x^*}\tilde{p}^X_\mathcal{V}\left(\sigma_{\tilde{p}^X_\mathcal{V}}^{-1}(x)\right)\notag\\
 &=\sum_{x=1}^{N(\mathcal{V})}\tilde{p}^X_\mathcal{V}\left(\sigma_{\tilde{p}^X_\mathcal{V}}^{-1}(x)\right)\notag\\
 &=1, 
\end{align}
which contradicts $\sum_{x\in\mathcal{X}}p_\tau^{X|1}(x)=1$. Therefore, the negation of (\ref{eq_S20}) 
\begin{equation}
\forall x\in\mathcal{X},\ \sum_{x'=1}^xp_\tau^{X|1}\left(\sigma_{p_\tau^{X|1}}^{-1}(x')\right)\leq\sum_{x'=1}^x\tilde{p}^X_\mathcal{V}\left(\sigma_{\tilde{p}^X_\mathcal{V}}^{-1}(x')\right)
\end{equation}
is true, which implies $p_\tau^{X|1}\prec \tilde{p}^X_\mathcal{V}$. Finally, from the monotonicity of the Shannon entropy (\ref{eq_S3}), we obtain $S\left(p_\tau^{X|1}\right)\geq S\left(\tilde{p}^X_\mathcal{V}\right)$. Then, substituting this and $p_\tau^Y(1)=1-\mathcal{V}$ into inequality (\ref{eq_S18}) leads to inequality (\ref{eq_S14}).
\end{proof}
%
%
%
%
%
%
\subsection{The monotonicity of $I_{p^X}(\mathcal{V})$ and the proof of the Main Theorem}\label{subsec_A4}
Although $\mathcal{V}$ was restricted to $\mathcal{V}\in[0,1]$ in the previous Subsection, we can apply the definitions of $N(\mathcal{V})$ and $\tilde{p}_\mathcal{V}^X$ in the previous Subsection to the case $\mathcal{V}>1$, and define $I_{p^X}(\mathcal{V})$ for any non-negative real number $\mathcal{V}$. Then, for $\mathcal{V}>1$, $N(\mathcal{V})=1$ holds, which implies $\tilde{p}^X_\mathcal{V}(x)=\delta_{\sigma_{p^X}(x),1}$. Thus, $I_{p^X}(\mathcal{V})$ takes constant value $I_{p^X}(\mathcal{V})=S(p^X)-(1-\mathcal{V})\cdot0=S(p^X)$ for $\mathcal{V}>1$. We then obtain the following lemma.
\begin{lem}\label{lem_S1}
$I_{p^X}(\mathcal{V})$ is monotonically increasing for $\mathcal{V}\geq0$, and strictly monotonically increasing for $0\leq \mathcal{V}\leq1-p^X_\downarrow(1)$.
\end{lem}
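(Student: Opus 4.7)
The plan is to show that $I_{p^X}(\mathcal{V})$ is continuous on $[0,\infty)$ and piecewise smooth, with a derivative that is strictly positive exactly when $N(\mathcal{V})\geq 2$ (and zero otherwise). First I would partition the half-line into the intervals $J_N\coloneqq [\,1-\sum_{k=1}^{N}p^X_\downarrow(k),\; 1-\sum_{k=1}^{N-1}p^X_\downarrow(k)\,)$ for $N=1,\ldots,n_X$ (with the sum vanishing for $N=0$, and with degenerate intervals simply dropped if some $p^X_\downarrow(k)=0$), plus the interval $J_0=[1,\infty)$. By the definition of $N(\mathcal{V})$, it takes the constant value $N$ throughout the interior of $J_N$.

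Second, on each $J_N$ with $N\geq 2$, I would substitute the explicit form (A15) of $\tilde{p}^X_\mathcal{V}$ into $I_{p^X}(\mathcal{V})=S(p^X)-(1-\mathcal{V})S(\tilde{p}^X_\mathcal{V})$. Writing $s_N\coloneqq \sum_{k=1}^{N-1}p^X_\downarrow(k)$ and expanding the logarithms $\ln[p^X_\downarrow(k)/(1-\mathcal{V})]$ and $\ln[(1-\mathcal{V}-s_N)/(1-\mathcal{V})]$ yields, after the $\ln(1-\mathcal{V})$ pieces combine,
\begin{equation*}
I_{p^X}(\mathcal{V}) = S(p^X)+\sum_{k=1}^{N-1}p^X_\downarrow(k)\ln p^X_\downarrow(k)-(1-\mathcal{V})\ln(1-\mathcal{V})+(1-\mathcal{V}-s_N)\ln(1-\mathcal{V}-s_N).
\end{equation*}
Differentiating and cancelling the constant terms gives
\begin{equation*}
\frac{d I_{p^X}}{d\mathcal{V}} = \ln\frac{1-\mathcal{V}}{1-\mathcal{V}-s_N} = -\ln\!\left(1-\frac{s_N}{1-\mathcal{V}}\right),
\end{equation*}
which is $\geq 0$ on $J_N$ and strictly positive in its interior whenever $s_N>0$, i.e.\ whenever $N\geq 2$. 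For $N=1$ we have $s_1=0$, so the derivative vanishes; in fact $\tilde{p}^X_\mathcal{V}$ is a point mass on $J_1$ and $I_{p^X}(\mathcal{V})=S(p^X)$ is constant, as is $I_{p^X}(\mathcal{V})=S(p^X)$ on $J_0$ where we set $\tilde{p}^X_\mathcal{V}(1)=1$ by convention.

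Third, I would verify continuity at each interface $\mathcal{V}^{*}=1-\sum_{k=1}^{M}p^X_\downarrow(k)$ separating $J_{M+1}$ from $J_{M}$. Approaching from inside $J_{M+1}$ we have $s_{M+1}=1-\mathcal{V}^{*}$, so the boundary term $(1-\mathcal{V}-s_{M+1})\ln(1-\mathcal{V}-s_{M+1})$ tends to $0$ (with the standard convention $0\ln 0=0$); approaching from inside $J_{M}$ we have $1-\mathcal{V}^{*}-s_{M}=p^X_\downarrow(M)$, so a direct substitution shows the two one-sided limits of the closed-form expression agree by virtue of the identity $\sum_{k=1}^{M-1}p^X_\downarrow(k)\ln p^X_\downarrow(k)+p^X_\downarrow(M)\ln p^X_\downarrow(M)=\sum_{k=1}^{M}p^X_\downarrow(k)\ln p^X_\downarrow(k)$. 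Combining piecewise strict monotonicity on the union $\bigcup_{N\geq 2}J_N=[0,1-p^X_\downarrow(1)]$, piecewise constancy beyond, and continuity at each interface yields the claim. The main subtlety is just bookkeeping at these interfaces and the convention at $0\ln 0$; the possible presence of zero values of $p^X_\downarrow(k)$ only collapses the corresponding $J_N$ to a point and is harmless.
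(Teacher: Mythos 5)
Your proof is correct, but it takes a genuinely different route from the paper's. The paper leverages the majorization machinery it has already set up: it shows $\tilde{p}^X_{\mathcal{V}}\prec\tilde{p}^X_{\mathcal{V}'}$ whenever $\mathcal{V}<\mathcal{V}'$, deduces $S(\tilde{p}^X_{\mathcal{V}})\geq S(\tilde{p}^X_{\mathcal{V}'})$ from Schur-concavity, and then gets strict increase by the one-line estimate
\begin{equation*}
I_{p^X}(\mathcal{V}')-I_{p^X}(\mathcal{V})=(1-\mathcal{V})S(\tilde{p}^X_{\mathcal{V}})-(1-\mathcal{V}')S(\tilde{p}^X_{\mathcal{V}'})\geq(\mathcal{V}'-\mathcal{V})S(\tilde{p}^X_{\mathcal{V}'})>0,
\end{equation*}
with a one-sided limit to attach the endpoint $\mathcal{V}=1-p^X_\downarrow(1)$. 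You instead compute the closed form of $I_{p^X}$ on each interval where $N(\mathcal{V})$ is constant and read the sign off the piecewise derivative $\ln\!\bigl[(1-\mathcal{V})/(1-\mathcal{V}-s_N)\bigr]$, gluing the pieces together by a continuity check at the breakpoints. I verified your algebra: the $\ln(1-\mathcal{V})$ contributions do cancel to leave the expression you wrote, the derivative is as claimed (and well-defined since $s_N<1-\mathcal{V}<s_N+p^X_\downarrow(N)$ on the interior of $J_N$), the two one-sided limits at each interface agree via $0\ln 0=0$, and the full interval $[0,1-p^X_\downarrow(1)]$ is exactly $\bigcup_{N\geq 2}\overline{J_N}$ while $N(\mathcal{V})=1$ forces $S(\tilde{p}^X_\mathcal{V})=0$ and hence a constant value beyond. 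Your approach is more elementary — it does not invoke majorization or Schur-concavity at all — at the cost of more bookkeeping at the breakpoints and an implicit appeal to the fact that a continuous function with positive derivative off a finite set is strictly increasing; the paper's approach is slicker and conceptually tied to the intuition (already used for Proposition A1) that $\tilde{p}^X_\mathcal{V}$ becomes more concentrated as $\mathcal{V}$ grows. One small imprecision: when $p^X_\downarrow(k)=0$ the interval $J_k$ is empty rather than a single point, but as you say this is harmless.
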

\begin{proof}
Let $\mathcal{V}$ and $\mathcal{V}'$ satisfy
\begin{equation}
0\leqq \mathcal{V}<\mathcal{V}'<1-p_\downarrow^X(1),
\end{equation}
which implies $N(\mathcal{V}')\geq2$. From this and the definition of $\tilde{p}_\mathcal{V}^X$, for $x\in\mathcal{X}$ such that $\sigma_{p^X}(x)=1$,
\begin{equation}
\tilde{p}^X_{\mathcal{V}'}(x)=\frac{p^X(x)}{1-\mathcal{V}'}\in(0,1)
\end{equation}
holds, which leads to $S\left(\tilde{p}_{\mathcal{V}'}^X\right)>0$. Next, we show $\tilde{p}^X_{\mathcal{V}}\prec\tilde{p}^X_{\mathcal{V}'}$. From the definition of $N(\mathcal{V})$, $N(\mathcal{V})\geq N(\mathcal{V}')$ holds for $\mathcal{V}<\mathcal{V}'$. Therefore, for $x<N(\mathcal{V'})$, we obtain
\begin{align}
&\sum_{x'=1}^{x}\tilde{p}_{\mathcal{V}'}^X\left(\sigma_{\tilde{p}_{\mathcal{V}'}^X}^{-1}(x')\right)-\sum_{x'=1}^{x}\tilde{p}_\mathcal{V}^X\left(\sigma_{\tilde{p}_{\mathcal{V}}^X}^{-1}(x')\right)\notag\\
&=\sum_{x'=1}^{x}\frac{p^X\left(\sigma_{p^X}(x')\right)}{1-\mathcal{V}'}-\sum_{x'=1}^{x}\frac{p^X\left(\sigma_{p^X}(x')\right)}{1-\mathcal{V}}\geq0.\label{eq_S26}
\end{align}
For $x\geq N(\mathcal{V}')$, we get $\sum_{x'=1}^{x}\tilde{p}_{\mathcal{V}'}^X\left(\sigma_{\tilde{p}_{\mathcal{V}'}^X}^{-1}(x')\right)=1$, which yields
\begin{align}
&\sum_{x'=1}^{x}\tilde{p}_{\mathcal{V}'}^X\left(\sigma_{\tilde{p}_{\mathcal{V}'}^X}^{-1}(x')\right)-\sum_{x'=1}^{x}\tilde{p}_\mathcal{V}^X\left(\sigma_{\tilde{p}_{\mathcal{V}}^X}^{-1}(x')\right)\notag\\
&=1-\sum_{x'=1}^{x}\tilde{p}_\mathcal{V}^X\left(\sigma_{\tilde{p}_{\mathcal{V}}^X}^{-1}(x')\right)\geq0.\label{eq_S27}
\end{align}
From (\ref{eq_S26}) and (\ref{eq_S27}), we obtain $\tilde{p}^X_{\mathcal{V}}\prec\tilde{p}^X_{\mathcal{V}'}$. This implies $S(\tilde{p}^X_{\mathcal{V}})\geq S(\tilde{p}^X_{\mathcal{V}'})$. From this and $S\left(\tilde{p}_{\mathcal{V}'}^X\right)>0$, we get
\begin{align}
  I_{p^X}(\mathcal{V}')-I_{p^X}(\mathcal{V})&=(1-\mathcal{V})S(\tilde{p}^X_{\mathcal{V}})-(1-\mathcal{V}')S(\tilde{p}^X_{\mathcal{V}'})\notag\\
  &\geq(\mathcal{V}'-\mathcal{V})S(\tilde{p}^X_{\mathcal{V}'})\notag\\
  &>0.
\end{align}
Thus $I_{p^X}(\mathcal{V})$ is strictly monotonically increasing for $0\leq \mathcal{V}<1-p_\downarrow^X(1)$. Furthermore,
\begin{align}
&\lim_{\mathcal{V}\nearrow1-p_\downarrow^X(1)}I_{p^X}(\mathcal{V})\notag\\
&=S(p^X)-\lim_{\mathcal{V}\nearrow1-p_\downarrow^X(1)}(1-\mathcal{V})S(\tilde{p}_\mathcal{V}^X)\\
&=S(p^X)+\lim_{\mathcal{V}\nearrow1-p_\downarrow^X(1)}\left[p_\downarrow^X(1)\ln\frac{p_\downarrow^X(1)}{1-\mathcal{V}}\right.\notag\\
&\hspace{60pt}\left.+\left(1-\mathcal{V}-p_\downarrow^X(1)\right)\ln\frac{1-\mathcal{V}-p_\downarrow^X(1)}{1-\mathcal{V}}\right]\\
&=S(p^X)\\
&=I_{p^X}\left(1-p_\downarrow^X(1)\right)
\end{align}
holds, which implies that $I_{p^X}(\mathcal{V})$ is strictly monotonically increasing for $0\leq \mathcal{V}\leq1-p_\downarrow^X(1)$. For $\mathcal{V}\geq1-p^X_\downarrow(1)$, $N(\mathcal{V})=1$ holds, and thus $I_{p^X}(\mathcal{V})$ takes the constant value $S(p^X)$. Therefore, $I_{p^X}(\mathcal{V})$ is monotonically increasing for $\mathcal{V}\geq0$.
\end{proof}
We now give the proof of the Main Theorem in the main text. By setting $t=\tau$ in $\Wass(p_0^{XY},p_t^{XY})\geq1-p_t^Y(1)=\mathcal{V}_t$ shown in the Subsection \ref{subsec_A3}, and using the Proposition \ref{prop_S1} and the Lemma \ref{lem_S1}, we obtain
\begin{equation}
I_\tau^{XY}\leq I_{p^X}(\Wass)=S(p^X)-(1-\Wass)S(\tilde{p}_\Wass^X),
\end{equation}
which is equivalent to the Main Theorem (Eq. (9)) in the main text. Now, we define $p_\tau^{XY}$ by Eq. (10) in the main text. From  Eq. (10), we obtain $S\left(p_\tau^{X|y}\right)=0$ for $y\neq1$, $p_\tau^Y(1)=1-\Wass$, and $S\left(p_\tau^{X|1}\right)=S\left(\tilde{p}_\Wass^X\right)$. Therefore, 
\begin{align}
 I_\tau^{X:Y}&=S\left(p^X\right)-p_\tau^Y(1)S\left(p^{X|1}\right)\notag\\
 &=S\left(p^X\right)-(1-\Wass)S\left(\tilde{p}_\Wass^X\right) 
\end{align}
holds, which verifies the optimality of $p_\tau^{XY}$ defined by Eq. (10). 
\subsection{The proof of the speed limits}\label{subsec_A5}
From the Lemma \ref{lem_S1}, the restriction of $I_{p^X}$ to $[0,1-p_\downarrow^X(1)]$ is an injective mapping. Therefore, we can define an inverse function $\Wass_{p^X}:\left[0,S(p^X)\right]\to[0,1-p_\downarrow^X(1)]$ for the fixed $p^X$ as
\begin{equation}
I_{p^X}(\Wass_{p^X}(I))=I.
\end{equation}
The graphical construction of the inverse function is illustrated in Fig. \ref{fig_S1}. Since $I_{p^X}$ is monotonically increasing, $\Wass_{p^X}$ is also monotonically increasing. 
\begin{figure}[tbp]
\centering
\includegraphics[scale=0.6]{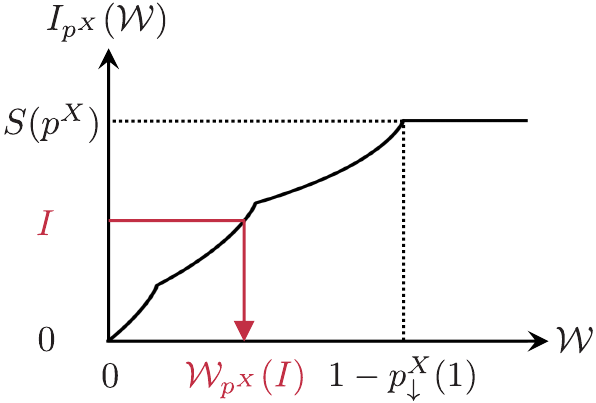}
\caption{Construction of the inverse function$\Wass_{p^X}$ of $I_{p^X}$. For $I\in[0,S\left(p^X\right)$], we define $\Wass_{p^X}(I)\in[0,1-p_\downarrow^X(1)]$ as $I=I_{p^X}\left(\Wass_{p^X}(I)\right)$.}
\label{fig_S1}
\end{figure}
Then, from the Main Theorem, we obtain
\begin{equation}
\Wass_{p^X}(I_\tau^{XY})\leq\Wass_{p^X}\left(I_{p^X}(\Wass)\right)=\Wass,\label{eq_S37}
\end{equation}
where $\Wass=\Wass\left(p_0^{XY},p_\tau^{XY}\right)$. From inequalities (\ref{eq_S37}) and (8) in the main text, we get the speed limits (inequalities (\ref{eq_11}) and (\ref{eq_act})) in the main text.
\subsection{The formalism by Lorenz curve}\label{subsec_A6}
\begin{figure}[tbp]
\centering
\includegraphics[scale=0.5]{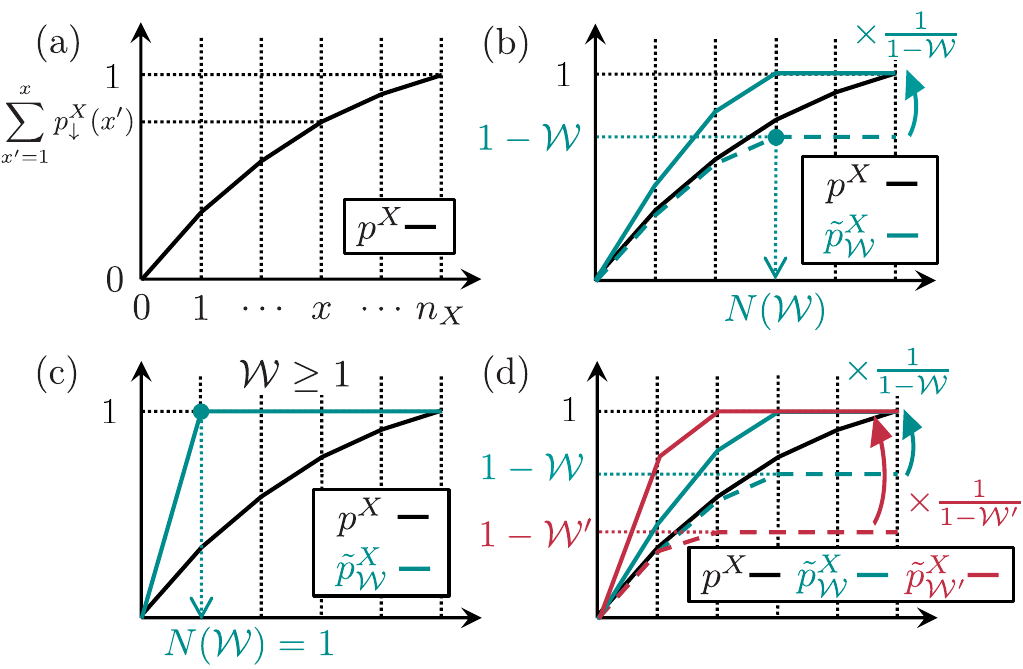}
\caption{(a) The Lorenz curve of $p^X$. (b) Definition of $\tilde{p}_\Wass^X$ and $N(\Wass)$ for $\Wass<1$. First, we obtain a polygonal line (green dashed line) by truncating the Lorenz curve of $p^X$ by the value $1-\Wass$. Then, we define $\tilde{p}_\Wass^X$ as a probability distribution corresponding to the Lorenz curve obtained by scaling the green dashed line by a factor of $1/(1-\Wass)$ (green solid line). $N(\Wass)$ is defined as the minimum value of $x$ which satisfies $\sum_{x'=1}^xp_\downarrow^X(x')>1-\Wass$ (if $\Wass=0$, $N(\Wass)\coloneqq n_X$). (c) Definition of $\tilde{p}_\Wass^X$ and $N(\Wass)$ for $\Wass\geq1$. (d) Comparison of Lorenz curves of $\tilde{p}_\Wass^X$ and $\tilde{p}_{\Wass'}^X$ when $\Wass'>\Wass$. The Lorenz curve of $\tilde{p}_{\Wass'}^X$ lies above that of $\tilde{p}_{\Wass}^X$.}
\label{fig_S1.5}
\end{figure}
We can also visually define $\tilde{p}_\Wass^X$ by the Lorenz curve \cite{Bhatia_2013_MatrixAnalysis,Sagawa_2022_Majorization}. The Lorenz curve is a polygonal line constructed as follows. First, let $p^X_\downarrow$ be the probability distribution obtained by rearranging $p^X$ in the descending order $p_\downarrow^X(1)\geq p_\downarrow^X(2)\geq\cdots\geq p_\downarrow^X(n_X)$. Then, we plot the cumulative probabilities $\sum_{x'=1}^xp_\downarrow^X(x')$ as a function of $x$, which gives the Lorenz curve of $p^X$ (Fig. \ref{fig_2} (a)). The Lorenz curve visualizes how uniformly probabilities are distributed among the states; the entropy $S\left(p^X\right)$ increases as the Lorentz curve of $p^X$ is located lower. \par
Now, using the Lorenz curve, $\tilde{p}_\Wass^X$ can be defined as a probability distribution which corresponds to the green Lorenz curve in Fig. \ref{fig_2} (b), when $\Wass<1$. Here, we define $N(\Wass)\in\mathcal{X}$ as depicted in Fig. \ref{fig_2} (b), which is used when discussing the conditions for achieving the equality in (\ref{eq_9}) . When $\Wass\geq1$, we define $N(\Wass)$ as 1, and $\tilde{p}_\Wass^X$ as a probability distribution corresponding to the green Lorenz cure in Fig. \ref{fig_2} (c).\par
We can also intuitively understand the meaning of inequality (\ref{eq_9}). As shown in Fig. \ref{fig_2} (d), the Lorentz curve of $\tilde{p}_\Wass^X$ moves upward as $\Wass$ increases, resulting in a decrease in $S\left(\tilde{p}_\Wass^X\right)$. Therefore, $I_{p^X}(\Wass)$ is an increasing function of $\Wass$, indicating that a larger Wasserstein distance $\Wass$ allows $Y$ to obtain more mutual information from $X$.
%
%
%
%
%
%
\section{The property of the ratio}\label{sec_B}
We here derive the following Proposition as a property of the fundamental bound obtained in this paper.
\setcounter{prop}{1}
\begin{prop}\label{prop_S2}
The ratio 
\begin{equation}
\frac{I_{p^X}(\Wass)}{\Wass^2/(\tau\langle m\rangle_\tau)}\eqqcolon\tau\langle m\rangle_\tau g(\Wass)
\end{equation}
obtained by dividing the upper bound on mutual information $I_{p^X}(\Wass)$ by the minimized entropy production $\min\Sigma_\tau^Y=\Wass^2/(\tau\langle m\rangle_\tau)$ for the fixed time-averaged mobility $\langle m\rangle_\tau$ is a decreasing function of $\Wass$.
\end{prop}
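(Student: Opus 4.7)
The plan is to reformulate the claim as a one-variable calculus inequality and check it region by region, where a region is a maximal interval of $\Wass$ on which $N(\Wass)$ is constant. Using the explicit form supplied by the proof of the Main Theorem,
\[
I_{p^X}(\Wass) = S(p^X) - H_{N-1} - (1-\Wass)\ln(1-\Wass) + (1-\Wass-P_{N-1})\ln(1-\Wass-P_{N-1})
\]
with $H_{N-1} \coloneqq -\sum_{k=1}^{N-1}p_\downarrow^X(k)\ln p_\downarrow^X(k)$ and $P_{N-1} \coloneqq \sum_{k=1}^{N-1}p_\downarrow^X(k)$, one obtains on each region
\[
I'_{p^X}(\Wass) = \ln\frac{1-\Wass}{1-\Wass-P_{N-1}},\qquad I''_{p^X}(\Wass) = \frac{P_{N-1}}{(1-\Wass)(1-\Wass-P_{N-1})}.
\]
Since $g'(\Wass) = [\Wass\, I'_{p^X}(\Wass) - 2I_{p^X}(\Wass)]/\Wass^3$ for $\Wass > 0$, monotone decrease of $g$ is equivalent to the pointwise bound $\Wass\, I'_{p^X}(\Wass) \leq 2 I_{p^X}(\Wass)$ on the interior of each region, continuity of $g$ across the boundaries $\Wass = 1-P_{N-1}$ being automatic from that of $I_{p^X}$.

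For the interior I would introduce $\phi(\Wass) \coloneqq 2 I_{p^X}(\Wass) - \Wass\, I'_{p^X}(\Wass)$ and track its sign via $\phi'(\Wass) = I'_{p^X}(\Wass) - \Wass\, I''_{p^X}(\Wass)$. The elementary inequality $\ln(a/b) \geq (a-b)/a$ for $a \geq b > 0$, applied with $a = 1-\Wass$ and $b = 1-\Wass-P_{N-1}$, yields the useful lower bound $I'_{p^X}(\Wass) \geq P_{N-1}/(1-\Wass)$; combined with the closed form for $I''_{p^X}$ this controls the sign of $\phi'$ on an initial subinterval, and integrating from the left endpoint of the region (where $\Wass = 1-P_N$ and $\phi$ reduces to an explicit algebraic quantity involving $P_N$ and $p_\downarrow^X(N)$) should yield $\phi \geq 0$ there.

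The main obstacle will be the behavior near the upper end of each region, i.e., as $\Wass$ approaches $1-P_{N-1}$ from below, since the left derivative of $I_{p^X}$ diverges logarithmically while its right derivative $\ln[P_{N-1}/p_\downarrow^X(N-1)]$ stays finite. In this regime $\Wass\, I'_{p^X}(\Wass)$ grows without bound while $2 I_{p^X}(\Wass)$ approaches a finite limit, so the target inequality tightens most sharply exactly there. I would handle this delicate regime by an asymptotic expansion of $I_{p^X}$ near the cusp, using that $(r-P_{N-1})\ln(r-P_{N-1}) \to 0$ (with $r = 1-\Wass$) while its derivative diverges, and verify that $\phi$ remains non-negative in the limit rather than being destroyed by the singularity of $I'_{p^X}$. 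Once this delicate check and the analogous matching of the one-sided derivatives of $g$ across each boundary are established, the piecewise monotonicities assemble, via continuity of $g$, into the global decrease of $g$ on $\Wass \geq 0$.
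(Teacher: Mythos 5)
Your reduction of $g'\leq 0$ to the pointwise inequality $\Wass\, I_{p^X}'(\Wass)\leq 2 I_{p^X}(\Wass)$ is correct, and so is the closed form $I_{p^X}'(\Wass)=\ln\left[(1-\Wass)/(1-\Wass-P_{N-1})\right]=-\ln\tilde p_\Wass^X(N(\Wass))$. You have also put your finger on the genuine obstacle: as $\Wass$ increases toward $1-P_{N-1}$ the slope $I_{p^X}'$ diverges while $2 I_{p^X}$ stays bounded. But that obstacle is fatal, not surmountable. Writing $\phi(\Wass)=2 I_{p^X}(\Wass)-\Wass\, I_{p^X}'(\Wass)$, you have $\phi(\Wass)\to -\infty$ as $\Wass\uparrow 1-P_{N-1}$ in every region with $N(\Wass)\geq 2$, so there is an open subinterval at the top of every such region on which $\phi<0$, i.e.\ $g'>0$. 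There is nothing for an asymptotic expansion to verify: $\phi$ does \emph{not} remain non-negative near the cusp, so the plan to ``verify that $\phi$ remains non-negative in the limit'' cannot succeed. A concrete instance: with $p^X=(1/2,1/2)$ one finds $g(0.40)\approx 2.642$, $g(0.45)\approx 2.596$, $g(0.49)\approx 2.682$, $g(0.50)\approx 2.773$, so $g$ first decreases and then increases on $(0,1/2)$. The Proposition as stated is therefore false.

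For context, the paper's own proof contains a sign error that masks this. It expands $g$ as $S(p^X)/\Wass^2 + (1-\Wass)S(\tilde p_\Wass^X)/\Wass^2$, whereas the correct expression from $I_{p^X}=S(p^X)-(1-\Wass)S(\tilde p_\Wass^X)$ has a minus sign. Carrying through the differentiation with the correct sign (and the same formula for $\diff S(\tilde p_\Wass^X)/\diff\Wass$) gives
\begin{equation*}
g'(\Wass)=-\frac{2\, I_{p^X}(\Wass)}{\Wass^3}-\frac{\ln\tilde p_\Wass^X\left(N(\Wass)\right)}{\Wass^2},
\end{equation*}
whose first term is $\leq 0$ but whose second term is $\geq 0$ and diverges to $+\infty$ as $\Wass\uparrow 1-P_{N-1}$, exactly matching the counterexample above. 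So neither your route nor the paper's establishes the claim; the statement itself requires repair, e.g.\ restricting $\Wass$ to a subinterval bounded away from the cusps of $I_{p^X}$.
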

\begin{proof}
We show that the function $g(\Wass)= I_{p^X}(\Wass)/(\Wass^2)$ is decreasing. For $\Wass\geq1-p_\downarrow^X(1)$, $g(\Wass)=S\left(p^X\right)/(\Wass^2)$ is decreasing. Then, we consider the case where $0<\Wass<1-p_\downarrow^X(1)$. From the definition of $\tilde{p}_\Wass^X$, $S\left(\tilde{p}_\Wass^X\right)$ is differentiable function of $\Wass$ for $\Wass\neq\Wass_n\coloneqq\sum_{x=n}^{n_X}p_\downarrow^X(x)\ (n=2,\cdots,\ n_X)$, and the derivative is given by
\begin{align}
&\frac{\diff}{\diff\Wass}S\left(\tilde{p}_\Wass^X\right)\notag\\
&=\frac{\diff}{\diff\Wass}\left[-\sum_{x=1}^{N(\Wass)-1}\frac{p^X(x)}{1-\Wass}\ln\frac{p^X(x)}{1-\Wass}\right.\notag\\
&\hspace{30pt}\left.-\left(1-\sum_{x=1}^{N(\Wass)-1}\frac{p^X(x)}{1-\Wass}\right)\ln\left(1-\sum_{x=1}^{N(\Wass)-1}\frac{p^X(x)}{1-\Wass}\right)\right]\notag\\
&=-\sum_{x=1}^{N(\Wass)-1}\frac{p^X(x)}{(1-\Wass)^2}\left[\ln\frac{p^X(x)}{1-\Wass}+1\right]\notag\\
&\hspace{30pt}+\sum_{x=1}^{N(\Wass)-1}\frac{p^X(x)}{(1-\Wass)^2}\left[\ln\left(1-\sum_{x=1}^{N(\Wass)-1}\frac{p^X(x)}{1-\Wass}\right)+1\right]\notag\\
&=\frac{S\left(\tilde{p}_\Wass^X\right)+\ln{\tilde{p}_\Wass^X(N(\Wass))}}{1-\Wass}.
\end{align}
By using this, we get
\begin{align}
\frac{\diff}{\diff\Wass}g(\Wass)
&=\frac{\diff}{\diff\Wass}\left[\frac{S\left(p^X\right)}{\Wass^2}+\frac{1-\Wass}{\Wass^2}S\left(\tilde{p}_\Wass^X\right)\right]\notag\\
&=-2\frac{S\left(p^X\right)}{\Wass^3}+\left[-\frac{2}{\Wass^3}+\frac{1}{\Wass^2}\right]S\left(\tilde{p}_\Wass^X\right)\notag\\
&\hspace{20pt}+\frac{1}{\Wass^2}\left[S\left(\tilde{p}_\Wass^X\right)+\ln{\tilde{p}_\Wass^X(N(\Wass))}\right]\notag\\
&=-2\frac{S\left(p^X\right)}{\Wass^3}-2\frac{1-\Wass}{\Wass^3}S\left(\tilde{p}_\Wass^X\right)\notag\\
&\hspace{20pt}+\frac{1}{\Wass^2}\ln{\tilde{p}_\Wass^X(N(\Wass))}\notag\\
&\leq0
\end{align}
for $\Wass\neq\Wass_n\ (n=2,\cdots,\ n_X)$. Here, for $n=2,\ 3,\cdots,\ n_X$,
\begin{align}
\lim_{\Wass\nearrow\Wass_n}&S\left(\tilde{p}_\Wass^X\right)\notag\\
&=\lim_{\Wass\nearrow\Wass_n}\left[-\sum_{x=1}^{n-1}\frac{p^X(x)}{1-\Wass}\ln\frac{p^X(x)}{1-\Wass}\right.\notag\\
&\hspace{20pt}\left.-\left(1-\sum_{x=1}^{n-1}\frac{p^X(x)}{1-\Wass}\right)\ln\left(1-\sum_{x=1}^{n-1}\frac{p^X(x)}{1-\Wass}\right)\right]\notag\\
&=-\sum_{x=1}^{n-1}\frac{p^X(x)}{1-\Wass_n}\ln\frac{p^X(x)}{1-\Wass_n}\notag\\
&=-\sum_{x=1}^{n-1}\frac{p^X(x)}{1-\Wass_n}\ln\frac{p^X(x)}{1-\Wass_n}\notag\\
&=S\left(\tilde{p}_{\Wass_n}^X\right)
\end{align}
holds, which implies that $S\left(\tilde{p}_\Wass^X\right)$ is continuous at $\Wass=\Wass_n\ (n=2,\cdots,\ n_X)$. Thus, $g(\Wass)$ is a decreasing function of $\Wass$.
\end{proof}
\begin{prop}\label{prop_S3}
The ratio 
\begin{equation}
\frac{I_{p^X}(\Wass)}{2\Wass\tanh^{-1}\left(\Wass/(\tau\langle a\rangle_\tau)\right)}\eqqcolon h(\Wass)
\end{equation}
obtained by dividing the upper bound on mutual information $I_{p^X}(\Wass)$ by the minimized entropy production $\min\Sigma_\tau^Y=2\Wass\tanh^{-1}\left(\Wass/(\tau\langle m\rangle_\tau)\right)$ for the fixed time-averaged activity $\langle a\rangle_\tau$ is a decreasing function of $\Wass$.
\end{prop}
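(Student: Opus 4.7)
The plan is to factor $h(\Wass)$ as a product of two nonnegative, nonincreasing functions, and then invoke the elementary fact that such a product is itself nonincreasing. Explicitly, I would write
\[
h(\Wass) = \frac{I_{p^X}(\Wass)}{\Wass^2} \cdot \frac{\Wass}{2\tanh^{-1}\!\left(\Wass/(\tau\langle a\rangle_\tau)\right)} = g(\Wass)\,\phi(\Wass),
\]
where $g(\Wass) = I_{p^X}(\Wass)/\Wass^2$ is (up to the positive prefactor $\tau\langle m\rangle_\tau$) precisely the function treated in Proposition \ref{prop_S2}, and $\phi(\Wass) := \Wass/\left[2\tanh^{-1}\!\left(\Wass/(\tau\langle a\rangle_\tau)\right)\right]$. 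Proposition \ref{prop_S2} already shows that $g$ is nonnegative and nonincreasing, so the problem reduces to establishing the same monotonicity for $\phi$.

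The remaining step is to show $\phi$ is nonincreasing on the relevant interval $\Wass \in (0,\tau\langle a\rangle_\tau)$. Setting $u := \Wass/(\tau\langle a\rangle_\tau) \in (0,1)$ and taking the logarithmic derivative, the inequality $\phi'(\Wass) \leq 0$ becomes the elementary estimate
\[
\tanh^{-1}(u) \leq \frac{u}{1-u^2}, \qquad u \in [0,1).
\]
This is verified by defining $F(u) := u/(1-u^2) - \tanh^{-1}(u)$ and noting $F(0) = 0$ together with $F'(u) = 2u^2/(1-u^2)^2 \geq 0$, so that $F \geq 0$ on $[0,1)$. Hence $\phi'(\Wass) \leq 0$, and since $\phi$ is manifestly positive on the domain, $\phi$ is a nonnegative nonincreasing function of $\Wass$.

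To conclude, since $g$ and $\phi$ are both nonnegative and nonincreasing, their product $h = g\phi$ is nonincreasing: for $\Wass_1 < \Wass_2$ we have $g(\Wass_1) \geq g(\Wass_2) \geq 0$ and $\phi(\Wass_1) \geq \phi(\Wass_2) \geq 0$, which multiply to give $h(\Wass_1) \geq h(\Wass_2)$. The main obstacle is really only the elementary calculus check on $\phi$, since the substantive majorization-based analysis of $I_{p^X}(\Wass)/\Wass^2$ is already carried out in the proof of Proposition \ref{prop_S2}; replacing the quadratic cost $\Wass^2/(\tau\langle m\rangle_\tau)$ by the activity-based cost $2\Wass\tanh^{-1}(\Wass/(\tau\langle a\rangle_\tau))$ only introduces the extra factor $\phi(\Wass)$, whose monotonicity is self-contained.
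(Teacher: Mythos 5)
Your proof is correct, and it takes a genuinely different and cleaner route than the paper's. The paper proves Proposition A3 by brute force: it writes out $\frac{\diff h}{\diff\Wass}$ directly (using $\frac{\diff}{\diff\Wass}S(\tilde p^X_\Wass)$ computed in Proposition A2), expands the quotient rule into four terms, and checks that each is nonpositive using $\mathcal{A}>\Wass$. You instead observe the algebraic identity $h=g\phi$ with $g(\Wass)=I_{p^X}(\Wass)/\Wass^2$ and $\phi(\Wass)=\Wass/\bigl[2\tanh^{-1}(\Wass/\mathcal{A})\bigr]$, import the monotonicity of $g$ from Proposition A2 as a black box, and reduce the new content to the one-line calculus fact $\tanh^{-1}(u)\le u/(1-u^2)$, which you verify by $F(0)=0$, $F'\ge0$. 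The product-of-nonnegative-nonincreasing-functions step is then elementary. This factorization makes the logical dependence on Proposition A2 explicit and isolates exactly what is new in the activity case; it also sidesteps the need to re-differentiate $S(\tilde p^X_\Wass)$ and to worry separately about the piecewise-differentiability at $\Wass=\Wass_n$, since $\phi$ is smooth and the nonsmoothness of $g$ was already handled in Proposition A2. One small bookkeeping point: the paper uses the nonstandard normalization $\tanh^{-1}(x)=\ln\frac{1+x}{1-x}$ (hence its derivative $\frac{2}{1-x^2}$), while your inequality $\tanh^{-1}(u)\le u/(1-u^2)$ uses the standard one; both are internally consistent and, since $\phi$'s monotonicity is unaffected by an overall constant factor, your argument goes through in either convention. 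You should also note, for completeness, that $g\ge0$ because $I_{p^X}(0)=0$ and $I_{p^X}$ is increasing by Lemma A1, and that $\phi>0$ on $(0,\mathcal{A})$; you state this but it is worth making explicit since the product argument needs both nonnegativity and monotonicity.
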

\begin{proof}
  In the same way as the Proposition \ref{prop_S2}, it suffices to show that the derivative of $h(\Wass)$ is non-positive for $\Wass\neq\Wass_n$. Since $\tanh^{-1}(x)=\ln\frac{1+x}{1-x}$, we get
  \begin{align}
    \frac{\diff}{\diff x}\tanh^{-1}(x)=\frac{2}{(1+x)(1-x)}.
  \end{align}
  Therefore, by setting $\mathcal{A}\coloneqq\tau\langle a\rangle_\tau$, the derivative of $h(\Wass)$ can be calculated as
  \begin{align}
    \frac{\diff h(\Wass)}{\diff\Wass}
    &=\frac{\diff}{\diff\Wass}\left[\frac{S\left(p^X\right)+(1-\Wass)S\left(\tilde{p}_\Wass^X\right)}{2\Wass\tanh^{-1}\frac{\Wass}{\mathcal{A}}}\right]\notag\\
    &=-\frac{\tanh^{-1}\frac{\Wass}{\mathcal{A}}+2\frac{\mathcal{A}\Wass}{(\mathcal{A}+\Wass)(\mathcal{A}-\Wass)}}{2\left[\Wass\tanh^{-1}\frac{\Wass}{\mathcal{A}}\right]^2}S\left(p^X\right)\notag\\
    &\hspace{15pt}-\frac{\mathcal{A}\Wass(1-\Wass)S\left(\tilde{p}_\Wass^X\right)}{(\mathcal{A}+\Wass)(\mathcal{A}-\Wass)\left[\Wass\tanh^{-1}\frac{\Wass}{\mathcal{A}}\right]^2}\notag\\
    &\hspace{20pt}-\frac{(1-\Wass)S\left(\tilde{p}_\Wass^X\right)}{2\Wass^2\tanh^{-1}\frac{\Wass}{\mathcal{A}}}+\frac{\ln\tilde{p}_\Wass^X\left(N(\Wass)\right)}{2\Wass\tanh^{-1}\frac{\Wass}{\mathcal{A}}}\\
    &\leq0,
  \end{align}
  where we used the fact that $\Wass/\mathcal{A}$ is in the domain of $\tanh^{-1}(x)$, which yields $\mathcal{A}>\Wass$.
\end{proof}
Since the upper bound on mutual information $I_{p^X}(\Wass)$ is an increasing function of $\Wass$, Propositions \ref{prop_S2} and \ref{prop_S3} imply that the mutual information obtained per dissipation decreases as $I_\tau^{X:Y}$ increases. This suggests that incomplete information gain (i.e., the case where $I_\tau^{X:Y}$ is small) might make sense in order to increase the average information gain per dissipation.
%
%
%
%
%
%
%
%
%
%
%
%
\section{The approximately optimal protocol for coupled quantum dots}\label{sec_C}
\subsection{The condition for achieving the equality in the inequality (8)}\label{subsec_C1}
In this Subsection, we apply the discussions in Ref. [44] to the setup of this paper. Considering the case where the transitions are induced by the difference in energy levels and there are no non-conservative forces, the stochastic heat can be expressed as $Q_t^{Y}(y,y'|x)=E_t^{XY}(x,y)-E_t^{XY}(x,y')$, where $E_t^{XY}(x,y)$ denotes the instantaneous energy level of the state $(x,y)$ at time $t$. In this case, the entropy production rate $\sigma_t^{XY}\coloneqq\diff\Sigma_t^{XY}/\diff t$ is given by
\begin{align}
\sigma_t^{XY}&=\frac{\diff}{\diff t} S(p_t^{XY})-\beta \frac{\diff}{\diff t} Q_t^{XY}\\
&=\sum_{x,y}\dot{p}_t^{XY}(x,y)\left[-\ln p_t^{XY}(x,y)-\beta E_t^{XY}(x,y)\right]\\
&=\sum_{x,y,y':y\neq y'}\left[A_t(y,y'|x)-A_t(y',y|x)\right]\ln\frac{e^{-\beta E_t^{XY}(x,y)}}{p_t^{XY}(x,y)}\\
&=\sum_{x,y,y':y\neq y'}A_t(y,y'|x)\ln\frac{e^{-\beta E_t^{XY}(x,y)}p_t^{XY}(x,y')}{e^{-\beta E_t^{XY}(x,y')}p_t^{XY}(x,y)}\\
&=\sum_{x,y,y':y\neq y'}A_t(y,y'|x)F_t(y,y'|x)\label{eq_S47}\\
&=\sum_{x,y,y':y>y'}J_t(y,y'|x)F_t(y,y'|x),
\end{align}
where we defined the frequency of transitions $A(y,y'|x)$, thermodynamic force $F(y,y'|x)$, and probability current $J(y,y'|x)$ from the state $(x,y')$ to $(x,y)$ as
\begin{align}
  A_t(y,y'|x)&\coloneqq R_t(y,y'|x)p^{XY}_t(x,y'),\\
  F_t(y,y'|x)&\coloneqq \ln\frac{A_t(y,y'|x)}{A_t(y',y|x)},\\
  J_t(y,y'|x)&\coloneqq A_t(y,y'|x)-A_t(y',y|x).
\end{align}
\par
We here explain the condition for achieving the equality in inequality (8) in the main text. We define the dynamical state mobility as
\begin{equation}
m_t\coloneqq\sum_{x,y,y':y>y'}\frac{J_t(y,y'|x)}{F_t(y,y'|x)},
\end{equation}
which characterizes the sum of the responses of the probability currents against the thermodynamic forces over all transitions, and we denote the time-averaged mobility as $\langle m\rangle_\tau\coloneqq(1/\tau)\int_0^\tau m_t\diff t$. By applying Cauchy-Schwarz inequality, we obtain 
\begin{align}
\sqrt{\Sigma_\tau^{XY}\tau\langle m\rangle_\tau}&\geq\int_0^\tau \sqrt{\sigma_t^{XY}}\sqrt{m_t}\diff t\notag\\
&\geq \int_0^\tau\sum_{x,y,y':y>y'}\abs{J_t(y,y'|x)}\diff t,\label{eq_S51}
\end{align}
where the equalities holds if thermodynamic force $F_t(y,y'|x)$ is constant for all $x,\ y,\ y'\ t$ with nonzero probability currents $J_t(y,y'|x)$. Here, we have
\begin{equation}
\int_0^\tau\sum_{x,y,y':y>y'}\abs{J_t(y,y'|x)}\diff t\geq\Wass\left(p_0^{XY},p_\tau^{XY}\right),\label{eq_S52}
\end{equation}
where the equality is achieved by transporting probabilities according to the optimal transport plan, which is the solution for the minimization problem in the definition of the Wasserstein distance (for more details, see Ref. [44]). Combining inequalities (\ref{eq_S51}) and (\ref{eq_S52}), we obtain inequality (8) in the main text. The condition for achieving the equality in (8) can be summarized as transporting probabilities according to the optimal transport plan under a uniform constant thermodynamic force.\par
For fixed activity, the condition for achieving eq. (8) in the main text is as follows. First, we define local and mean activity at time $t$ as
\begin{align}
  a_t(y,y'|x)&\coloneqq A_t(y,y'|x)+A_t(y',y|x),\\
  a_t&\coloneqq\sum_{x,y,y':y>y'}a_t(y,y'|x).
\end{align}
We denote its time-averaged value as $\langle a\rangle_\tau\coloneqq(1/\tau)\int_0^\tau a_t\diff t$. By defining local entropy production rate as $\sigma_t^{XY}(y,y'|x)\coloneqq J_t(y,y'|x)F_t(y,y'|x)$, we can obtain
\begin{align}
  \sigma_t^{XY}(y,y'|x)=2J_t(y,y'|x)\tanh^{-1}\frac{J_t(y,y'|x)}{a_t(y,y'|x)}.
\end{align}
Since $x\tanh^{-1}(x/y)$ is a convex function for any $x,\ y$ which satisfy $0\leq x<y,\ y>0$ and an increasing function of $x$, we can apply Jensen's inequality and inequality (\ref{eq_S52}) and get
\begin{align}
  \Sigma_\tau\geq2\Wass\left(p_0^{XY},p_\tau^{XY}\right)\tanh^{-1}\frac{\Wass\left(p_0^{XY},p_\tau^{XY}\right)}{\tau\langle a\rangle_\tau},\label{eq_S_act}
\end{align}
which is equivalent to the inequality (8) in main text for $f(x)=2\tanh^{-1}(x)$. The condition for achieving the Jensen's inequality is keeping local activity $a_t(y,y'|x)$ and probability current $J_t(y,y'|x)$ constant for all $x,\ y,\ y'$ and $t$. Therefore the equality in (\ref{eq_S_act}) can be achieved by transporting probabilities under uniform constant activity and probability current along the optimal transport plan from $p_0^{XY}$ to $p_\tau^{XY}$.
\subsection{The protocol for fixed mobility}\label{subsec_C2}
In this Subsection, we explicitly describe how to construct the approximately optimal protocol for coupled quantum dots, and give the derivation of it. The equality in inequality  (\ref{eq_11}) in the main text is achieved by simultaneously achieving equalities in (9) and (8) in the main text. Denoting the probability distribution by the matrix $[p_t^{XY}(x,y)]_{x,y}$, when $0\leq\Wass\leq 1-p_\downarrow^X(1)=p$, the initial and the optimal final distributions are given by
\begin{equation}
p_0^{XY}=\left[
\begin{array}{cc}
p & 1-p \\
0 & 0
\end{array}\right]
,\ p_\tau^{XY}=\left[
\begin{array}{cc}
p-\Wass & 1-p \\
\Wass & 0
\end{array}\right],\label{eq_S53}
\end{equation}
by which the equality in (9) is achieved. Since the second column representing $x=2$ is invariant, the transport between the states $(2,1)$ and $(2,2)$ does not occur in the optimal transport plan, which implies $J_t(2,1|2)=0$ at any time $t\in[0,\tau]$. This can be asymptotically achieved in the limit $\beta U\to\infty$. This condition can be interpreted as setting the interaction between two quantum dots to be sufficiently larger than the thermal fluctuation to make the probability to find the system in the state $(2,2)$ sufficiently small. This statement can be rigorously shown as follows. From $p_0^{XY}(2,2)=0$ and $R_t(2,1|2)=\Gamma^Y\left[1+e^{\beta(\epsilon_t^Y+U-\mu^Y)}\right]^{-1}$, we have
\begin{align}
  p_t^{XY}(2,2)&=\int_0^t\frac{\diff}{\diff t}p_{t'}^{XY}(2,2)\diff t'\notag\\
  &\leq\int_0^tR_{t'}(2,1|2)p_{t'}^{XY}(2,2)\diff t'\notag\\
  &\leq\int_0^t\frac{\Gamma^Y}{1+e^{\beta(\epsilon_{t'}^Y+U-\mu^Y)}}\diff t'\notag\\
  &\xrightarrow[\beta U\to\infty]{}0,
\end{align}
which implies $p_t^{XY}(2,2)\to0$ as $\beta U\to\infty$. Therefore, we get
\begin{align}
&\left|J_t(2,1|2)\right|\notag\\
&\leq R_t(2,1|2)p_t^{XY}(2,1)+R_t(1,2|2)p_t^{XY}(2,1)\notag\\
&=\frac{\Gamma^Yp_t^{XY}(2,1)}{1+e^{\beta(\epsilon_t^Y+U-\mu^Y)}}+\frac{\Gamma^Ye^{\beta(\epsilon_t^Y+U-\mu^Y)}}{1+e^{\beta(\epsilon_t^Y+U-\mu^Y)}}p_t^{XY}(2,2)\notag\\
&\xrightarrow[\beta U\to\infty]{}0.
\end{align}
When $J_t(2,1|2)=0$, the condition for achieving the equality in (8) is transporting the probability from state $(1,1)$ to $(1,2)$ under a constant thermodynamic force $F_t(2,1|1)=F$.\par
We next derive the control protocol $\{\epsilon_t^Y\}$ achieving this condition. We consider the limit $\beta U\to\infty$, where $R_t(2,1|2)=R_t(1,2|2)=0$ and $p_t^{XY}(1,2)=p-p_t^{XY}(1,1)$ holds. From this and Eq. (\ref{eq_12}) in the main text, the equation $F_t(2,1|1)=F$ is transformed into
\begin{equation}
\epsilon_t^Y=\mu^Y-k_\mathrm{B}T\left[F+\ln\frac{p-p_t^{XY}(1,1)}{p_t^{XY}(1,1)}\right].\label{eq_S54}
\end{equation}
By substituting Eq. (\ref{eq_S54}) into Eq. (\ref{eq_12}) in the main text and setting $r=e^F$, we get
\begin{align}
  R_t(2,1|1)&=\frac{\Gamma^Yr[p-p_t^{XY}(1,1)]}{r[p-p_t^{XY}(1,1)]+p_t^{XY}(1,1)},\notag\\
  R_t(1,2|1)&=\frac{\Gamma^Yp_t^{XY}(1,1)}{r[p-p_t^{XY}(1,1)]+p_t^{XY}(1,1)}.\label{eq_S57}
\end{align}
From this and the time evolution (\ref{eq_2}) in the main text, we have 
\begin{equation}
\dot{p}_t^{XY}(1,1)=-\Gamma^Y(r-1)\frac{p_t^{XY}(1,1)\left[p-p_t^{XY}(1,1)\right]}{r[p-p_t^{XY}(1,1)]+p_t^{XY}(1,1)},
\end{equation}
which leads to
\begin{equation}
\quad\frac{\diff}{\diff t}\ln\frac{{p_t^{XY}(1,1)}^r}{p-p_t^{XY}(1,1)}=-\Gamma^Y(r-1).\label{eq_S55}
\end{equation}
By using the function $\phi_{p,r}(x)\coloneqq\ln\left[{x}^r/(p-x)\right]$, the solution for the differential equation (\ref{eq_S55}) can be expressed as
\begin{equation}
\phi_{p,r}\left(p_t^{XY}(1,1)\right)-\phi_{p,r}\left(p_{t_0}^{XY}(1,1)\right)=-\Gamma^Y(r-1)t,\label{eq_S56}
\end{equation}
given an initial value $p_{t_0}^{XY}(1,1)$ at time $t=t_0$. By substituting $p_t^{XY}(1,1)$ determined by Eq. (\ref{eq_S56}) into Eq. (\ref{eq_S54}), we obtain the protocol to transport the probability from the state $(1,1)$ to $(1,2)$ under the constant thermodynamic force $F$.\par
We here note that this protocol is only asymptotically applicable to the initial and final distributions in (\ref{eq_S53}). Since $\phi_{p,r}$ is a decreasing function on $(0,p)$ which satisfies $\lim_{x\to0}\phi_{p,r}(x)=+\infty$ and $\lim_{x\to p}\phi_{p,r}(x)=-\infty$, we cannot set $t_0=0$ (i.e., $p_{t_0}^{XY}(1,1)=p$) in Eq. (\ref{eq_S56}). Instead, we must wait for $p_t^{XY}(1,1)$ to be $p_{t_0}^{XY}(1,1)=p(1-\Delta)$ for some $\Delta\in(0,1)$, then implement the dynamics determined by Eq. (\ref{eq_S56}). The explicit protocol is as follows.
\begin{enumerate}
\renewcommand{\labelenumi}{(\Roman{enumi})}
\item Let $\Delta\in(0,1/(1+r))$. For $0\leq t\leq t_0$, where $t_0$ is defined according to $\Delta$ later, set
\begin{equation}
\epsilon_t^Y=\mu^Y-k_\mathrm{B}T\left[F+\ln\frac{\Delta}{1-\Delta}\right].
\end{equation}
This is obtained by substituting $p_t^{XY}(1,1)=p(1-\Delta)$ into Eq. (\ref{eq_S54}). Then, we have
\begin{equation}
R_t(2,1|1)=\frac{\Gamma^Yr\Delta}{r\Delta+1-\Delta},\quad R_t(1,2|1)=\frac{\Gamma^Y(1-\Delta)}{r\Delta+1-\Delta},
\end{equation}
which yields
\begin{equation}
p_t^{XY}(1,1)=p\exp\left(-\frac{1-\Delta-r\Delta}{1-\Delta+r\Delta}\Gamma^Yt\right).
\end{equation}
Here, $t_0$ is defined by the condition $p_{t_0}^{XY}(1,1)=p(1-\Delta)$, which is transformed into
\begin{equation}
t_0=-\frac{1-\Delta+r\Delta}{1-\Delta-r\Delta}\frac{\ln(1-\Delta)}{\Gamma^Y},
\end{equation}
which vanishes in the limit $\Delta\to 0$. This process is accompanied by the non-optimal entropy production calculated as
\begin{align}
\Sigma_{t_0}^{XY}
&=S\left(p_{t_0}^{XY}\right)-S\left(p_{0}^{XY}\right)-\beta Q_{t_0}^Y\\
&=-p(1-\Delta)\ln[p(1-\Delta)]-p\Delta\ln(p\Delta)\notag\\
&\hspace{20pt}+p\ln p+p\Delta\ln\frac{r\Delta}{1-\Delta}\\
&=-p\ln(1-\Delta)+p\Delta\ln r,
\end{align}
which also vanishes in the limit $\Delta\to 0$.
\item For $t_0\leq t\leq\tau$, set $\epsilon_t^Y$ as Eq. (\ref{eq_S54}), where $p_t^{XY}(1,1)$ satisfies Eq. (\ref{eq_S56}). Since $p_{t_0}^{XY}(1,1)=p(1-\Delta)$ and $p_\tau^{XY}(1,1)=p-\Wass$, the end time $\tau$ is determined by
\begin{equation}
\tau=\frac{\phi_{p,r}(p-\Wass)-\phi_{p,r}(p(1-\Delta))}{\Gamma^Y(r-1)}.
\end{equation}
We can modify $\tau$ by adjusting $\Gamma^Y$. We here note that the protocol for $\Wass=p$ cannot be implemented in this way because $\lim_{\Wass\to }\phi_{p,r}(p-\Wass)=+\infty$. However, we can set $\Wass$ to be arbitrarily close to $p$. 
\end{enumerate}
\subsection{The protocol for fixed activity}\label{subsec_C3}
In this Subsection, we explicitly describe how to construct the approximately optimal protocol for coupled quantum dots in the case where activity is fixed. The equality in inequality  (\ref{eq_12}) in the main text is achieved by simultaneously achieving equalities in (9) and (8) in the main text. Denoting the probability distribution by the matrix $[p_t^{XY}(x,y)]_{x,y}$, when $0\leq\Wass\leq 1-p_\downarrow^X(1)=p$, the initial and the optimal final distributions are given by (\ref{eq_S53}) by which the equality in (9) is achieved.\par
In this case, the condition for achieving the equality in  (\ref{eq_12}) is keeping constant activity $a_t(2,1|1)=a$ and probability current $J_t(2,1|1)=J=\Wass/\tau$. This conditions are transformed as
\begin{align}
  R_t(2,1|1)&=\left(a+\frac{\Wass}{\tau}\right)\frac{1}{2p_t^{XY}(1,1)},\label{eq_A_69}\\
  R_t(1,2|1)&=\left(a-\frac{\Wass}{\tau}\right)\frac{1}{2p_t^{XY}(1,2)}\label{eq_A_70},
\end{align}
from which we can obtain
\begin{align}
  \epsilon_t^Y&=\mu^Y+k_\mathrm{B}T\ln\left[\frac{(a\tau-\Wass)p_t^{XY}(1,1)}{(a\tau+\Wass)p_t^{XY}(1,2)}\right],\label{eq_A_71}\\
  \Gamma_t^Y&=\frac{(a\tau-\Wass)p_t^{XY}(1,1)+(a\tau+\Wass)p_t^{XY}(1,2)}{2p_t^{XY}(1,1)p_t^{XY}(1,2)}a.\label{eq_A_72}
\end{align}
Here, when the initial time is set to $t_0=0$, $p_{t_0}^{XY}(1,2)=0$ is required, which inplies $\epsilon_t^Y\to\infty$ and$\Gamma_t^Y\to\infty$ at $t=t_0$. Therefore, it is necessary to wait until $t=t_0(>0)$ for a positive
probability $\Delta$ to be stored in the state $(1,2)$. the rigorous protocol is as follows:
\begin{enumerate}
\item[(I)] Let $\Delta>0$ be sufficiently small constant. From time$t=0$ to $t=t_0$ which is determined below, set $\epsilon_t^Y$ and $\Gamma_t^Y$ to constant values
\begin{align}
\epsilon_t^Y&=\mu^Y+k_\mathrm{B}T\ln\frac{(a\tau-\Wass)(1-\Delta)}{(a\tau+\Wass)\Delta},\label{eq_A_73}\\
\Gamma_t^Y&=\Gamma^Y=\frac{(a\tau-\Wass)(1-\Delta)+(a\tau+\Wass)\Delta}{2p(1-\Delta)\Delta}a.\label{eq_A_74}
\end{align}
which implies
\begin{equation}
  R_t(2,1|1)=\frac{a+\frac{\Wass}{\tau}}{2p(1-\Delta)},\quad R_t^{Y|1}(1,2)=\frac{a-\frac{\Wass}{\tau}}{2p\Delta}
\end{equation}
and
\begin{equation}
p_t^{XY}(1,1)=pe^{-\Gamma^Yt}+p_\Delta\left(1-e^{-\Gamma^Yt}\right).
\end{equation}
Therefore, by substituting $p_{t_0}^{XY}(1,1)=p(1-\Delta)$, the waiting time $t_0$ is determined by
\begin{equation}
t_0=\frac{1}{\Gamma^Y}\ln\frac{p-p_\Delta}{p(1-\Delta)-p_\Delta},
\end{equation}
which converges to $0$ in the limit $\Delta\to 0$. This waiting process accompanies non-optimal energy cost
\begin{align}
\Sigma_{t_0}^{XY}
&=S\left(p_{t_0}^{XY}\right)-S\left(p_{0}^{XY}\right)-\beta Q_{t_0}^Y\\
&=-p(1-\Delta)\ln[p(1-\Delta)]-p\Delta\ln(p\Delta)\notag\\
&\hspace{15pt}+p\ln p-p\Delta\ln\frac{(a\tau-\Wass)(1-\Delta)}{(a\tau+\Wass)\Delta}\\
&=-p\ln(1-\Delta)+p\Delta\ln\frac{a\tau+\Wass}{a\tau-\Wass},
\end{align}
which also converges to $0$ in the limit $\Delta\to 0$.
\item[(II)] For $t_0\leq t\leq\tau$, let $\epsilon_t^Y$ and $\Gamma_t^Y$satisfy Eqs. (\ref{eq_A_71}) and (\ref{eq_A_72}).
\end{enumerate}
\subsection{Comparison between the optimal and non-optimal protocols 
 for inequality (\ref{eq_11})}\label{subsec_C4}
Here we see an example of a non-optimal protocol, in particular, the protocol that approximately achieves the equality in (\ref{eq_1}) but does not achieve the equality in  (\ref{eq_11}). In other words, although the minimum energy cost for the transformation from $p_0^{XY}$ to $p_\tau^{XY}$ is almost achieved, $p_\tau^{XY}$ is not the optimal distribution for given $I_\tau^{X:Y}$. The probability distribution $p^X$ is set to be $p^X(0)=p\ (\leq1/2)$ and $p^X(1)=1-p$. In this case, the initial distribution $p_0^{XY}$ and a non-optimal distribution $p_\tau^{XY}$ which satisfies $\Wass\left(p_0^{XY},p_\tau^{XY}\right)=\Wass$ are given by
\begin{equation}
p_0^{XY}=\left[
\begin{array}{cc}
p & 1-p \\
0 & 0
\end{array}\right]
,\ p_\tau^{XY}=\left[
\begin{array}{cc}
p & 1-p-\Wass \\
0 & \Wass
\end{array}\right].\label{eq_S63}
\end{equation}
While the transformation determined by Eq. (\ref{eq_S53}) transports the probability for $x=1$, the transformation determined by Eq. (\ref{eq_S63}) transports the probability for $x=2$. However, since the interaction between $X$ and $Y$ is repulsive in this setting, the transformation that transfers the probability from state $(2,1)$ to $(2,2)$ while keeping $p_t^{XY}(1,1)>p_t^{XY}(1,2)$ cannot be implemented. Therefore, we change the initial distribution of memory $Y$ from $p_0^Y(1)=1$ to $p_0^Y(2)=1$. This change is allowed because the main result of this paper does not depend on the choice of the state that $Y$ initially takes with probability 1. In this case, the initial and final distributions are given by
\begin{equation}
p_0^{XY}=\left[
\begin{array}{cc}
0 & 0 \\
p & 1-p
\end{array}\right]
,\ p_\tau^{XY}=\left[
\begin{array}{cc}
0 & \Wass \\
p & 1-p-\Wass 
\end{array}\right].\label{eq_S64}
\end{equation}
Considering the symmetry, the control protocol for this transformation under the constant thermodynamic force $F$ is approximately given by applying the transformation
\begin{align}
\begin{cases}
p\rightarrow 1-p,\\
\epsilon_t^Y\rightarrow2\mu^Y-U-\epsilon_t^Y,\\
x=1\leftrightarrow x=2,\\
y=1\leftrightarrow y=2,
\end{cases}
\end{align}
to the protocol determined by (I) and (II) in the previous Subsection in the limit $\beta U\to\infty$ and $\Delta\to0$. 
\begin{figure*}[tbp]
\centering
\includegraphics[scale=0.8]{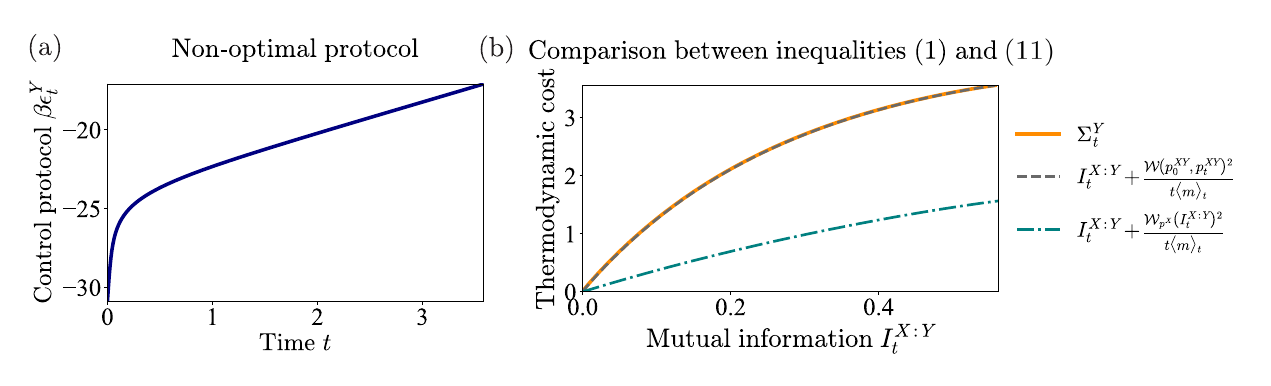}
\caption{Numerical results on the example with two quantum dots. The parameters are set to $\beta U=20$, $F=4$, $\mu^Y=0$, $\Gamma^Y=2$, $\Delta=0.001$. (a) The non-optimal protocol $\epsilon_t^Y$ scaled by $\beta$, which is given by Eq. (S66). (b) Comparison of the left- and right-hand side of inequalities  (\ref{eq_1}) and  (\ref{eq_11}). We see that the protocol shown in (a) approximately (but almost exactly) achieves the equality in (\ref{eq_1}) (gray dashed line), but it does not achieve the equality in  (\ref{eq_11}) (green dash-dot-dash line).}
\label{fig_S2}
\end{figure*}
For $p=0.25$ and $\Wass=0.999p$, the numerical simulation for the non-optimal protocol is shown in Fig. \ref{fig_S2} (a), and the comparison between the lower bound by inequality (\ref{eq_1}) and the lower bound by inequality  (\ref{eq_11}) is shown in Fig. \ref{fig_S2} (b). This shows that the energy cost in this process almost achieves the minimum energy cost determined by inequality (\ref{eq_1}), but does not achieve the minimum energy cost determined by inequality  (\ref{eq_11}). Therefore, this example is not an optimal measurement for obtaining a given amount of mutual information.
\subsection{Comparison between the optimal and non-optimal protocols for inequality (\ref{eq_act})}
Here, we consider a non-optimal protocol that approximately achieves the equality in (\ref{eq_1}) but does not satisfy the equality in (\ref{eq_act}). As well as the previous subsection, we consider the case where the transformation from \(p_0^{XY}\) to \(p_\tau^{XY}\) approximately minimizes the energy cost, while \(p_\tau^{XY}\) itself is not the optimal distribution for given \(I_\tau^{X:Y}\). Assume that \(p^X\) is fixed to \(p^X(0) = p\) (\(p \leq 1/2\)) and \(p^X(1) = 1-p\). The established \(p_0^{XY}\) and the non-optimal \(p_\tau^{XY}\), whose Wasserstein distance from the initial distribution \(p_0^{XY}\) is \(\Wass\), are given by:
\begin{equation}
p_0^{XY} = \left[
\begin{array}{cc}
p & 0 \\
1-p & 0
\end{array}
\right], \quad
p_\tau^{XY} = \left[
\begin{array}{cc}
p & 0 \\
1-p-\Wass & \Wass
\end{array}
\right]\label{eq_S63}
\end{equation}
In contrast to the optimal distribution given by Eq. (\ref{eq_10}), which transfers probability \(\Wass\) from \(y=2\) to \(y=1\), the distribution defined by equation (\ref{eq_S63}) transfers \(\Wass\) from \(y=1\) to \(y=2\). However, due to repulsive interactions between \(X\) and \(Y\) in this setup, a transformation that moves probability from state \(1,2\) to \(2,2\) while keeping the probability of state \(1,1\) is not practically implementable. Therefore, we change the initial distribution of memory $Y$ from $p_0^Y(1)=1$ to $p_0^Y(2)=1$, as done in the previous section. The modified initial and final distributions are given by
\begin{equation}
p_0^{XY} = \left[
\begin{array}{cc}
0 & 0 \\
p & 1-p
\end{array}
\right], \quad
p_\tau^{XY} = \left[
\begin{array}{cc}
0 & \Wass \\
p & 1-p-\Wass
\end{array}
\right]\label{eq_S64}
\end{equation}
The optimal transport between these distributions can be approximately achieved by applying the transformation
\begin{align}
\begin{cases}
&p \rightarrow 1-p,\\
&\epsilon_t^Y \rightarrow 2\mu^Y - U - \epsilon_t^Y,\\
&x=1 \leftrightarrow x=2,\\
&y=1 \leftrightarrow y=2
\end{cases}
\end{align}
to the protocol determined by (I) and (II) in the previous Subsection in the limit $\beta U\to\infty$ and $\Delta\to0$. 
This non-optimal protocol is illustrated in Fig. \ref{fig_nonopt_act} (a) and (b), and the energy cost \(\Sigma_\tau^Y\) is compared with the lower bounds provided by inequalities (\ref{eq_1}) and (\ref{eq_act}) in Fig. \ref{fig_nonopt_act} (c).
\begin{figure*}[tbp]
\centering
\includegraphics[scale=0.8]{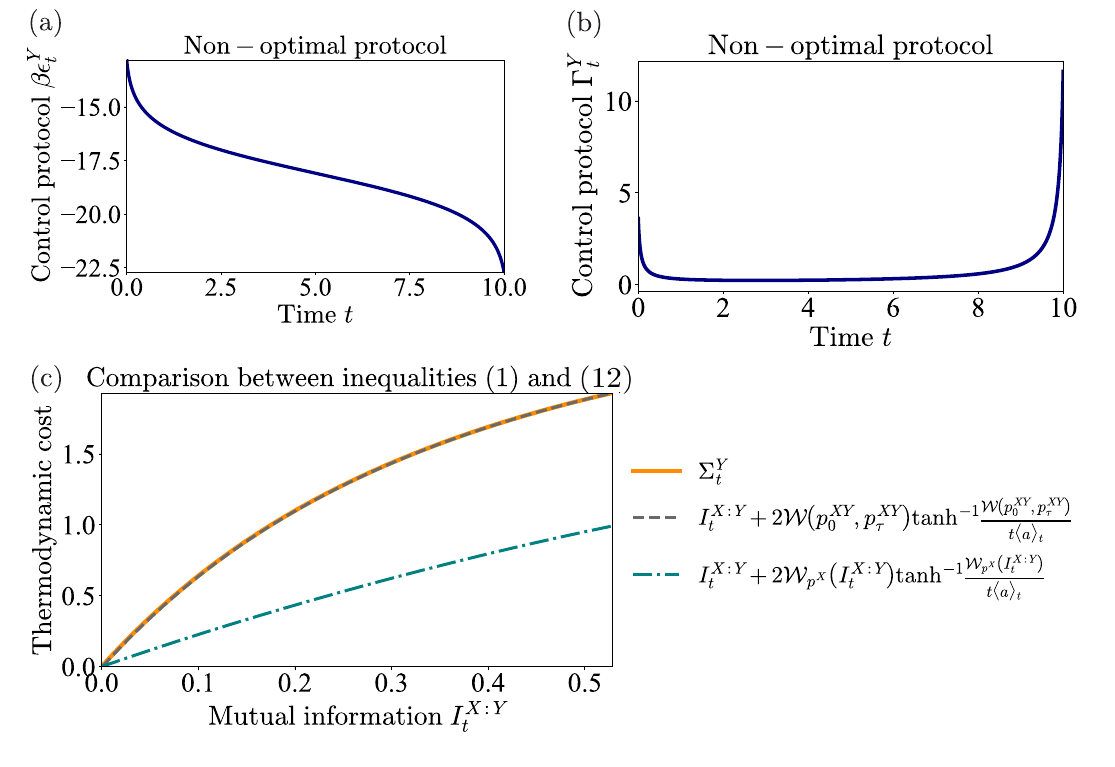}
\caption{Numerical results on the example with two quantum dots. The parameters are set to $\beta U=20$, $A=0.1$, $\tau=10$, $\mu^Y=0$, $\Delta=0.005$. (a) The non-optimal protocol $\epsilon_t^Y$ scaled by $\beta$. (b) (a) The non-optimal protocol $\Gamma_t^Y$. (c) Comparison of the left- and right-hand side of inequalities (\ref{eq_1}) and  (\ref{eq_act}). We see that the protocol shown in (a) approximately (but almost exactly) achieves the equality in (\ref{eq_1}) (gray dashed line), but it does not achieve the equality in  (\ref{eq_12}) (green dash-dot-dash line).}
\label{fig_nonopt_act}
\end{figure*}
\end{appendix}
%
%
%
%
\let\oldaddcontentsline\addcontentsline
\renewcommand{\addcontentsline}[3]{}

\end{document}